\documentclass{article} 
\usepackage{iclr2023_conference,times}


\usepackage{amsmath,amsfonts,bm}

















\def\1{\bm{1}}

\def\eps{{\epsilon}}








\def\vu{{\bm{u}}}

\def\vx{{\bm{x}}}



\DeclareMathAlphabet{\mathsfit}{\encodingdefault}{\sfdefault}{m}{sl}
\SetMathAlphabet{\mathsfit}{bold}{\encodingdefault}{\sfdefault}{bx}{n}












\newcommand{\KL}{D_{\mathrm{KL}}}



\DeclareMathOperator*{\argmax}{arg\,max}

\usepackage{xparse}
\usepackage{hyperref}
\usepackage{mleftright}
\usepackage{amsmath}
\usepackage{amsthm}
\usepackage[shortlabels]{enumitem}
\usepackage[capitalize,noabbrev]{cleveref}
\usepackage{pifont}
\usepackage{multirow}
\usepackage{float}
\usepackage{wrapfig}

\newtheorem{proposition}{Proposition}
\newtheorem{lemma}{Lemma}

\newtheorem*{definition*}{Definition}

\newtheorem{theorem}{Theorem}
\newtheorem{observation}{Observation}
\newtheorem{corollary}{Corollary}

\usepackage{thm-restate}
\usepackage{booktabs}

\usepackage{mathtools}
\newcommand{\defeq}{\vcentcolon=}
\renewcommand{\vec}[1]{\bm{#1}}

\renewcommand{\KL}[2]{D_{\mathrm{KL}}(#1\,\|\,#2)}

\newcommand{\myvector}[1]{\bm{#1}}

\usepackage[linesnumbered,ruled,lined,noend]{algorithm2e}
\makeatletter


\SetKwComment{Hline}{}{\vspace{-3mm}\textcolor{gray}{\hrule}\vspace{1mm}}
\definecolor{darkgrey}{gray}{0.3}
\definecolor{commentcolor}{gray}{0.5}
\SetKwComment{Comment}{\color{commentcolor}[$\triangleright$\ }{}
\SetCommentSty{}
\SetNlSty{}{\color{darkgrey}}{}
\setlength{\algomargin}{4mm}
\SetKwProg{Fn}{function}{}{}
\SetKwProg{Subr}{subroutine}{}{}
\def\HiLi{\leavevmode\rlap{\hbox to \hsize{\color{yellow!50}\leaders\hrule height .8\baselineskip depth .5ex\hfill}}}



\def\[#1\]{%
  \begin{align*}%
    #1%
  \end{align*}%
}

\NewDocumentCommand{\numberthis}{om}{%
\IfNoValueTF{#1}{%
    \refstepcounter{equation}\tag{\theequation}%
  }{%
    \tag{#1}%
  }%
  \label{#2}%
}



\newcommand{\regu}{\phi}

\newcommand{\vU}{{\vec U}}
\renewcommand{\vu}{{\vec u}}
\renewcommand{\vx}{{\vec \pi}}
\newcommand{\pikl}{piKL\xspace}
\newcommand{\dilpikl}{DiL-\pikl}

\newcommand{\pot}{\Psi}
\newcommand{\ili}{{i,\lambda_i}}
\NewDocumentCommand{\UBnumberthis}{om}{
  \IfNoValueTF{#1}{\refstepcounter{equation}(\text \theequation)}{(\text{#1})}%
  \edef\eqname{\IfNoValueTF{#1}{\theequation}{#1}}
  \@bsphack
  \begingroup
    \@onelevel@sanitize\@currentlabelname
    \edef\@currentlabelname{%
      \expandafter\strip@period\@currentlabelname\relax.\relax\@@@%
    }%
    \protected@write\@auxout{}{%
      \string\newlabel{#2}{%
        {\eqname}%
        {\thepage}%
        {\@currentlabelname}%
        {\@currentHref}{}%
      }%
    }%
  \endgroup
  \@esphack
}
\newcommand{\temp}{\kappa}
\newcommand{\bdU}{W} 
\newcommand{\bdT}{Q_i} 
\newcommand{\dsur}[1]{\tilde{D}^{#1}_\mathrm{KL}}
\DeclareMathOperator{\supp}{supp}

\newcommand{\avg}[2]{%
    \Es_{\lambda_{#1}\sim\beta_{#1}}\mleft[ #2 \mright]%
}

\newcommand{\qrealgo}{piKL\xspace}
\newcommand{\bqrealgo}{DiL-piKL\xspace}
\newcommand{\rlbqrealgo}{RL-DiL-piKL\xspace}
\newcommand{\botname}{{Diplodocus}\xspace}

\DeclareMathOperator*{\Prob}{\mathbb{P}}
\DeclareMathOperator*{\Es}{\mathbb{E}}
\DeclareMathOperator{\E}{\mathbb{E}}

\title{Mastering the Game of No-Press \\Diplomacy via Human-Regularized \\Reinforcement Learning and Planning}


\author{Anton Bakhtin\thanks{Equal first author contribution.} \\
Meta AI \\
\And
David J Wu\footnotemark[1] \\
Meta AI \\
\And
Adam Lerer\footnotemark[1] \\
Meta AI \\
\And
Jonathan Gray\footnotemark[1] \\
Meta AI \\
\And
Athul Paul Jacob\footnotemark[1] \\
MIT \\
\And
Gabriele Farina\footnotemark[1] \\
Meta AI \\
\And
Alexander H Miller \\
Meta AI \\
\And
Noam Brown \\
Meta AI \\
}

%

\iclrfinalcopy 
\begin{document}

\maketitle

\vspace{-0.05in}
\begin{abstract}
\vspace{-0.05in}
  No-press Diplomacy is a complex strategy game involving both cooperation and competition that has served as a benchmark for multi-agent AI research. 
  While self-play reinforcement learning has resulted in numerous successes in purely adversarial games like chess, Go, and poker, self-play alone is insufficient for achieving optimal performance in domains involving cooperation with humans. We address this shortcoming by first introducing a planning algorithm we call \bqrealgo{} that regularizes a reward-maximizing policy toward a human imitation-learned policy. We prove that this is a no-regret learning algorithm under a modified utility function. We then show that \bqrealgo{} can be extended into a self-play reinforcement learning algorithm we call \rlbqrealgo{} that provides a model of human play while simultaneously training an agent that responds well to this human model. We used \rlbqrealgo{} to train an agent we name \botname.
In a 200-game no-press Diplomacy tournament involving 62 human participants spanning skill levels from beginner to expert, two \botname agents both achieved a higher average score than all other participants who played more than two games, and ranked first and third according to an Elo ratings model.
\end{abstract}
\vspace{-0.1in}
\section{Introduction}
\vspace{-0.05in}
In two-player zero-sum (2p0s) settings, principled self-play algorithms converge to a minimax equilibrium, which in a balanced game ensures that a player will not lose in expectation regardless of the opponent's strategy~\citep{neumann1928theorie}. This fact has allowed self-play, even without human data, to achieve remarkable success in 2p0s games like chess~\citep{silver2018general}, Go~\citep{silver2017mastering}, poker~\citep{bowling2015heads,brown2017superhuman}, and Dota 2~\citep{berner2019dota}.\footnote{Dota 2 is a two-team zero-sum game, but the presence of full information sharing between teammates makes it equivalent to 2p0s. Beyond 2p0s settings, self-play algorithms have also proven successful in highly adversarial games like six-player poker~\cite{brown2019superhuman}.} In principle, \emph{any} finite 2p0s game can be solved via self-play given sufficient compute and memory.
However, in games involving \emph{cooperation}, self-play alone no longer guarantees good performance when playing with humans, even with \emph{infinite} compute and memory. This is because in complex domains there may be arbitrarily many conventions and expectations for how to cooperate, of which humans may use only a small subset~\citep{lerer2019learning}. The clearest example of this is language.
A self-play agent trained from scratch without human data in a cooperative game involving free-form communication channels would almost certainly not converge to using English as the medium of communication. Obviously, such an agent would perform poorly when paired with a human English speaker.
Indeed, prior work has shown that na\"ive extensions of self-play from scratch without human data perform poorly when playing with humans or human-like agents even in dialogue-free domains that involve cooperation rather than just competition, such as the benchmark games no-press Diplomacy~\citep{bakhtin2021no} and Hanabi~\citep{siu2021evaluation,cui2021k}. 

Recently, \citep{jacob2022modeling} introduced piKL, which models human behavior in many games better than pure behavioral cloning  (BC) on human data by regularizing inference-time planning toward a BC policy.
In this work, we introduce an extension of piKL, called \bqrealgo, that replaces piKL's single fixed regularization parameter~$\lambda$ with a probability distribution over~$\lambda$ parameters. We then show how \bqrealgo can be combined with self-play reinforcement learning, allowing us to train a strong agent that performs well with humans. We call this algorithm \textbf{\rlbqrealgo}.

Using \rlbqrealgo we trained an agent, \botname, to play no-press Diplomacy, a difficult benchmark for multi-agent AI that has been actively studied in recent years~\citep{paquette2019no,anthony2020learning,gray2020human,bakhtin2021no,jacob2022modeling}.
We conducted a 200-game no-press Diplomacy tournament with a diverse pool of human players, including expert humans, in which we tested two versions of \botname using different \rlbqrealgo settings, and other baseline agents. All games consisted of one bot and six humans, with all players being anonymous for the duration of the game. These two versions of \botname achieved the top two average scores in the tournament among all 48 participants who played more than two games, and ranked first and third overall among all participants according to an Elo ratings model.

\vspace{-0.05in}
\section{Background and Prior work}
\vspace{-0.05in}
\label{sec:related_diplomacy}

Diplomacy is a benchmark 7-player mixed cooperative/competitive game featuring simultaneous moves and a heavy emphasis on negotiation and coordination. In the no-press variant of the game, there is no cheap talk communication. Instead, players only implicitly communicate through moves.

In the game, seven players compete for majority control of 34 ``supply centers'' (SCs) on a map. On each turn, players simultaneously choose actions consisting of an order for each of their units to hold, move, support or convoy another unit. If no player controls a majority of SCs and all remaining players agree to a draw or a turn limit is reached then the game ends in a draw. In this case, we use a common scoring system in which the score of player~$i$ is $C_i^2 /\sum_{i'}C_{i'}^2$, where $C_i$ is the number of SCs player~$i$ owns. A more detailed description of the rules is provided in \cref{appendix:rules_diplomacy}.

Most recent successes in no-press Diplomacy use deep learning to imitate human behavior given a corpus of human games. The first Diplomacy agent to leverage deep imitation learning was~\cite{paquette2019no}.
Subsequent work on no-press Diplomacy have mostly relied on a similar architecture with some modeling improvements~\citep{gray2020human,anthony2020learning,bakhtin2021no}. 

\cite{gray2020human} proposed an agent that plays an improved policy via one-ply search. It uses policy and value functions trained on human data to to conduct search using regret minimization. 

Several works explored applying self-play to compute improved policies. \citet{paquette2019no} applied an actor-critic approach and found that while the agent plays stronger in populations of other self-play agents, it plays worse against a population of human-imitation agents.
\citet{anthony2020learning} used a self-play approach based on a modification of fictitious play in order to reduce drift from human conventions.
The resulting policy is stronger than pure imitation learning in both 1vs6 and 6vs1 settings but weaker than agents that use search. Most recently, \citet{bakhtin2021no} combined one-ply search based on equilibrium computation with value iteration to produce an agent called DORA.
DORA achieved superhuman performance in a 2p0s version of Diplomacy without human data, but in the full 7-player game plays poorly with agents other than itself.

\citet{jacob2022modeling} showed that regularizing inference-time search techniques can produce agents that are not only strong but can also model human behaviour well. In the domain of no-press Diplomacy, they show that regularizing hedge (an equilibrium-finding algorithm) with a KL-divergence penalty towards a human imitation learning policy can match or exceed the human action prediction accuracy of imitation learning while being substantially stronger. KL-regularization toward human behavioral policies has previously been proposed in various forms in single- and multi-agent RL algorithms~\citep{nair2018overcoming,siegel2020keep,nair2020accelerating}, and was notably employed in AlphaStar~\citep{vinyals2019grandmaster}, but this has typically been used to improve sample efficiency and aid exploration rather than to better model and coordinate with human play.

An alternative line of research has attempted to build human-compatible agents without relying on human data~\citep{hu2020other,hu2021off,strouse2021collaborating}. These techniques have shown some success in simplified settings but have not been shown to be competitive with humans in large-scale collaborative environments.

\vspace{-0.05in}
\subsection{Markov Games}
\vspace{-0.05in}
In this work, we focus on multiplayer Markov games~\citep{shapley1953stochastic}. 
\begin{definition*}
An $n$-player Markov game $\Delta$ is a tuple $\langle S, A_1,\dots,A_n,r_1,\dots,r_n,p\rangle$ where $S$ is the state space, $A^i$ is the action space of player $i$ ($i=1,\dots,n$), $r_i : S \times A_1 \times \cdots \times A_n \rightarrow \mathbb{R}$ is the reward function for player~$i$, $f : S \times A_1 \times \cdots \times A_n \rightarrow S$ is the transition function.
\end{definition*}

The goal of each player $i$, is to choose a policy $\pi_i(s): S \rightarrow \Delta A_i$ that maximizes the expected reward for that player, given the policies of all other players. In case of $n=1$, a Markov game reduces to a Markov Decision Process (MDP) where an agent interacts with a fixed environment.

At each state~$s$, each player~$i$ simultaneously chooses an action~$a_i$ from a set of actions $\mathcal{A}_i$.
We denote the actions of all players other than~$i$ as $\vec{a}_{-i}$.
Players may also choose a probability distribution over actions, where the probability of action~$a_i$ is denoted $\pi_i(s, a_i)$ or $\sigma_i(a_i)$ and the vector of probabilities is denoted $\vec{\pi}_i(s)$ or $\vx_i$.

\vspace{-0.05in}
\subsection{Hedge}
\vspace{-0.05in}
\textbf{Hedge}~\cite{littlestone1994weighted,freund1997decision} is an iterative algorithm that converges to an equilibrium. We use variants of hedge for planning by using them to compute an equilibrium policy on each turn of the game and then playing that policy.

Assume that after player~$i$ chooses an action~$a_i$ and all other players choose actions~$a_{-i}$, player~$i$ receives a reward of $u_i(a_i, \vec{a}_{-i})$, where $u_i$ will come from our RL-trained value function. We denote the average reward in hindsight for action~$a_i$ up to iteration $t$ as $Q^t(a_i) = \frac{1}{t}\sum_{t' \le t} u_i(a_i, a^{t'}_{-i})$.

On each iteration~$t$ of hedge, the policy $\vx^t_i(a_i)$ is set according to $\vx^{t}_i(a_i) \propto \exp\big(Q^{t-1}(a_i) / \temp_{t-1}\big)$
where $\temp_t$ is a temperature parameter.\footnote{We use $\temp_t$ rather than $\eta$ used in \citet{jacob2022modeling} in order to clean up notation. $\temp_t = 1/(\eta \cdot t)$.}

It is proven that if $\temp_t$ is set to $\frac{1}{\sqrt{t}}$ then as $t \rightarrow \infty$ the \emph{average} policy over all iterations converges to a coarse correlated equilibrium, though in practice it often comes close to a Nash equilibrium as well.
In all experiments we set $\temp_t = \frac{3 \mathcal{S}_t}{10 \sqrt{t}}$ on iteration~$t$, where $\mathcal{S}_t$ is the observed standard deviation of the player's utility up to iteration~$t$, based on a heuristic from~\cite{brown2017dynamic}. A simpler choice is to set $\temp_t = 0$, which makes the algorithm equivalent to fictitious play~\citep{brown1951iterative}.

\textbf{Regret matching (RM)}~\citep{blackwell1956analog,hart2000simple} is an alternative equilibrium-finding algorithm that has similar theoretical guarantees to hedge and was used in previously work on Diplomacy~\cite{gray2020human,bakhtin2021no}. We do not use this algorithm but we do evaluate baseline agents that use RM. 

\vspace{-0.05in}
\subsection{DORA: Self-play learning in Markov games}
\vspace{-0.05in}
\label{sec:dora}

Our approach draws significantly from DORA~\citep{bakhtin2021no}, which we describe in more detail here. In this approach, the authors run an algorithm that is similar to past model-based reinforcement-learning methods such as AlphaZero \citep{silver2018general}, except in place of Monte Carlo tree search, which is unsound in simultaneous-action games such as Diplomacy or other imperfect information games, it instead uses an equilibrium-finding algorithm such as hedge or RM to iteratively approximate a Nash equilibrium for the current state (i.e., one-step lookahead search). A deep neural net trained to predict the policy is used to sample plausible actions for all players to reduce the large action space in Diplomacy down to a tractable subset for the equilibrium-finding procedure, and a deep neural net trained to predict state values is used to evaluate the results of joint actions sampled by this procedure. Beginning with a policy and value network randomly initialized from scratch, a large number of self-play games are played and the resulting equilibrium policies and the improved 1-step value estimates computed on every turn from equilibrium-finding are added to a replay buffer used for subsequently improving the policy and value. Additionally, a double-oracle~\citep{mcmahan2003doubleoracle} method was used to allow the policy to explore and discover additional actions, and the same equilibrium-finding procedure was also used at test time.

For the core update step, \citet{bakhtin2021no} propose Deep Nash Value Iteration~(DNVI), a value iteration procedure similar to Nash Q-Learning~\citep{hu2003nash}, which is a generalization of Q-learning~\citep{watkins1989learning} from MDPs to Stochastic games.
The idea of Nash-Q is to compute equilibrium policies $\sigma$ in a subgame where the actions correspond to the possible actions in a current state and the payoffs are defined using the current approximation of the value function.
\cite{bakhtin2021no} propose an equivalent update that uses a state value function $V(s)$ instead of a state-action value function $Q(s, a)$:
\begin{equation}\label{eq:nashv}
\myvector{V}(s) \leftarrow (1 - \alpha) \myvector{V}(s) + \alpha (\myvector{r} + \gamma \sum\limits_{\myvector{a'}} \sigma(\myvector{a'}) \myvector{V}(f(s, \myvector{a'})))
\end{equation}
where $\alpha$ is the learning rate, $\sigma(\cdot)$ is the probability of joint action in equilibrium, $\myvector{a'}$ is joint action, and $f$ is the transition function.
For 2p0s games and certain other game classes, this algorithm converges to a Nash equilibrium in the original stochastic game under the assumption that an exploration policy is used such that each state is visited infinitely often .

The tabular approach of Nash-Q does not scale to large games such as Diplomacy. 
DNVI replaces the explicit value function table and update rule in~\ref{eq:nashv} with a value function parameterized by a neural network, $\myvector{V}(s;\theta_v)$ and uses gradient descent to update it using the following loss:
\begin{equation}\label{eq:loss_value}
\begin{aligned}
&\text{ValueLoss}(\theta_v) = \frac{1}{2}\left(\myvector{V}(s;\theta_v) - \myvector{r}(s) - \gamma\sum\limits_{\myvector{a'}} \sigma(\myvector{a'}) \myvector{V}\left(f(s, \myvector{a'});\hat{\theta}_v\right)\right)^2
\end{aligned}
\end{equation}

The summation used in~\ref{eq:loss_value} is not feasible in games with large action spaces as the number of joint actions grow exponentially with the number of players.
\citet{bakhtin2021no} address this issue by considering only a subset of actions at each step.
An auxiliary function, a policy proposal network $\pi_i(s, a_i; \theta_{{\pi}})$,  models the probability that an action $a_i$ of player $i$ is in the support of the equilibrium $\sigma$.
Only the top-$k$ sampled actions from this distribution are considered when solving for the equilibrium policy $\sigma$ and computing the above value loss. Once the equilibrium is computed, the equilibrium policy is also used to further train the policy proposal network using cross entropy loss:

\begin{equation}\label{eq:loss_policy}
\begin{aligned}
&\text{PolicyLoss}(\theta_\pi) = - \sum_i \sum_{a_i \in A_i} \sigma_i(a)\log\pi_i(s, a_i; \theta_{{\pi}}).
\end{aligned}
\end{equation}

\cite{bakhtin2021no} report that the resulting agent DORA does very well when playing with other copies of itself.
However, DORA performs poorly in games with 6 human human-like agents.

\vspace{-0.05in}
\subsection{piKL: Modeling humans with imitation-anchored planning} 
\vspace{-0.05in}
\textit{Behavioral cloning (BC)} is the standard approach for modeling human behaviors given data. Behavioral cloning learns a policy that maximizes the likelihood of the human data by gradient descent on a cross-entropy loss. However, as observed and discussed in \cite{jacob2022modeling}, BC often falls short of accurately modeling or matching human-level performance, with BC models underperforming the human players they are trained to imitate in games such as Chess, Go, and Diplomacy. Intuitively, it might seem that initializing self-play with an imitation-learned policy would result in an agent that is both strong and human-like. Indeed, \cite{bakhtin2021no} showed improved performance against human-like agents when initializing the DORA training procedure from a human imitation policy and value, rather than starting from scratch. However, we show in \autoref{sec:ablations} that such an approach still results in policies that deviate from human-compatible equilibria.

\citet{jacob2022modeling} found that an effective solution was to perform search with a regularization penalty proportional to the KL divergance from a human imitation policy. This algorithm is referred to as \textbf{piKL}.
The form of piKL we focus on in this paper is a variant of hedge called piKL-hedge, in which each player~$i$ seeks to maximize expected reward, while at the same time playing ``close'' to a fixed \textbf{anchor policy}~$\vec{\tau}_i$. The two goals can be reconciled by defining a composite utility function that adds a penalty based on the ``distance'' between the player policy and their anchor policy, with coefficient $\lambda_i\in[0,\infty)$ scaling the penalty.

For each player~$i$, we define $i$'s utility as a function of the agent policy $\vx_i \in \Delta(A_i)$ given policies $\vx_{-i}$ of all other agents:
\begin{align}\label{eq:regularized C}
\tilde u_{i,\lambda_i}(\vx_i, \vx_{-i}) &\defeq u_i(\vx_i, \vx_{-i}) - \lambda_i\,\KL{\vx_i}{\vec{\tau}_i}
\end{align}

This results in a modification of hedge such that on each iteration~$t$, $\vx^t_i(a_i)$ is set according to
\begin{equation}\label{eq:pikl_hedge}
    \vx^t_{i}(a_i) \propto \exp\mleft\{\frac{Q^{t-1}(a_i) + \lambda \,\log\tau_i(a_i)}{\temp_{t-1} + \lambda}\mright\}
\end{equation}

When $\lambda$ is large, the utility function is dominated by the KL-divergence term $\lambda_i\,\KL{\vx_i}{\vec{\tau}_i}$, and so the agent will naturally tend to play a policy $\vx_i$ close to the anchor policy $\vec{\tau}_i$. When $\lambda_i$ is small, the dominating term is the rewards $u_i(\vx_i, \vec{a}^t_{-i})$ and so the agent will tend to maximize reward without as closely matching the anchor policy $\vec{\tau}_i$.

\vspace{-0.05in}
\section{Distributional Lambda piKL (\bqrealgo)}
\vspace{-0.05in}
\label{sec:dilpikl}

piKL trades off between the strength of the agent and the closeness to the anchor policy using a single fixed $\lambda$ parameter.
In practice, we find that sampling $\lambda$ from a probability distribution each iteration produces better performance. In this section, we introduce \textbf{distributional lambda piKL (\bqrealgo)}, which replaces the single $\lambda$ parameter in \qrealgo{} with a probability distribution~$\beta$ over $\lambda$ values. On each iteration, each player~$i$ samples a $\lambda$ value from~$\beta_i$ and then chooses a policy based on Equation~\ref{eq:pikl_hedge} using that sampled $\lambda$. Figure~\ref{algo:noregret} highlights the difference between \qrealgo{} and \bqrealgo{}.

\begin{figure}[t]
\small
\vspace{-0.05in}
    \begin{minipage}{8cm}
        \begin{figure}[H]
        \SetInd{0.4em}{0.6em}
        \scalebox{.9}{\begin{algorithm}[H]\caption{\textsc{\bqrealgo} (for Player~$i$)}\label{algo:noregret}
            \DontPrintSemicolon
            \KwData{\mbox{~\textbullet~} $A_i$ set of actions for Player~$i$;\newline
                \mbox{~\textbullet~} $u_i$ reward function for Player~$i$;\newline
                \HiLi \mbox{~\textbullet~} $\Lambda_i$ a set of $\lambda$ values to consider for Player~$i$;\newline
                \HiLi \mbox{~\textbullet~} $\beta_i$ a belief distribution over $\lambda$ values for Player~$i$.
                }
            \BlankLine
            \Fn{\normalfont\textsc{Initialize}()}{
                $t \gets 0$\;

                \For{\normalfont\textbf{each} action $a_i \in A_i$}{
                    $\mathsf{Q}^0_i(a_i) \gets 0$\;
                }
            }
            \Hline{}
            \Fn{\normalfont\textsc{Play}()}{
                $t \gets t + 1$\;
                \HiLi sample $\lambda \sim \beta_i$\;
                let $\vec{\pi}^t_{i, \lambda}$ be the policy such that 
                \label{line:distribution}
                \[\small
                    \displaystyle\vec{\pi}^t_{i,\lambda}(a_i) \propto \exp\mleft\{\frac{\,Q^{t-1}(a_i) + \lambda \,\log\tau_i(a_i)}{\temp_{t-1} + \lambda}\mright\}
                \]\;\vspace{-5mm}
                \label{line:pick a}                
                sample an action $a^t_i \sim \vec{\pi}^t_{i, \lambda}$\;
                play $a^t_i \in A_i$ and observe actions $\vec{a}^t_{-i}$ played by the opponents\;
                
                \For{\normalfont\textbf{each} $a_i \in A_i$}{
                    $Q^{t}(a_i) \gets \frac{t-1}{t} Q^{t-1}(a_i) + \frac{1}{t} u_i(a_i, \vec{a}^t_{-i})$\;
                }
            }
        \end{algorithm}}
        \vspace{-0.03in}
        \caption{\small \bqrealgo{} algorithm. Lines with highlights show the main differences between this algorithm and piKL-Hedge algorithm proposed in~\cite{jacob2022modeling}.}
        \end{figure}
\end{minipage}\quad
\begin{minipage}{5.8cm}
\vspace{-1cm}
    \begin{figure}[H]
    \hspace{-1.5cm}
\includegraphics[scale=.26]{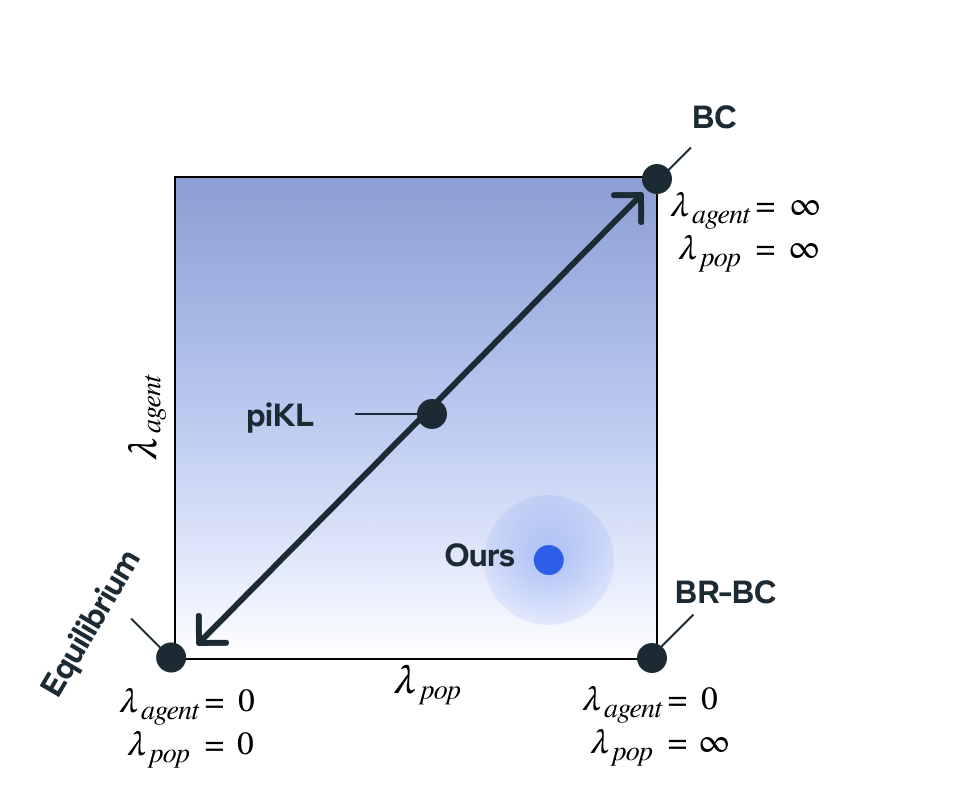}
\vspace{-0.3in}
\caption{\small $\lambda_{\textit{pop}}$ represents the common-knowledge belief about the $\lambda$ parameter or distribution used by all players. $\lambda_{\textit{agent}}$ represents the $\lambda$ value actually used by the agent to determine its policy. By having $\lambda_{\textit{agent}}$ differ from $\lambda_{\textit{pop}}$, DiL-piKL interpolates between an equilibrium under the utility function~$u_i$, behavioral cloning and best response to behavioral cloning policies. piKL assumed a common $\lambda$, which moved it along one axis of the space. Our agent models and coordinates with high-$\lambda$ players while playing a lower $\lambda$ itself.}
\label{fig:dil_pikl_spectrum}
    \end{figure}
\end{minipage}
\vspace{-0.05in}
\label{bqre_algo}
\end{figure}

One interpretation of \bqrealgo{} is that each choice of $\lambda$ is an agent \textbf{type}, where agent types with high $\lambda$ choose policies closer to $\vec{\tau}$ while agent types with low $\lambda$ choose policies that are more ``optimal'' and less constrained to a common-knowledge anchor policy. A priori, each player is randomly sampled from this population of agent types, and the distribution $\beta_i$ represents the common-knowledge uncertainty about which of the agent types player~$i$ may be. Another interpretation is that piKL assumed an exponential relation between action EV and likelihood, whereas \bqrealgo{} results in a fatter-tailed distribution that may more robustly model different playing styles or game situations.

\subsection{Coordinating with piKL policies}

While piKL and DiL-piKL are intended to model human behavior, an optimal policy in cooperative environments should be closer to a \textit{best response} to this distribution. Selecting different $\lambda$ values for the common-knowledge population versus the policy the agent actually plays allows us to interpolate between BC, best response to BC, and equilibrium policies (Figure \ref{fig:dil_pikl_spectrum}). 
In practice, our agent samples from $\beta_i$ during equilibrium computation but ultimately plays a low $\lambda$ policy, modeling the fact that other players are unaware of our agent's true type. 

\subsection{Theoretical Properties of \bqrealgo}
\bqrealgo{} can be understood as a \emph{sampled} form of follow-the-regularized-leader (FTRL). Specifically, one can think of Algorithm~\ref{algo:noregret} as an instantiation of FTRL over the Bayesian game induced by the set $\Lambda_i = \supp \beta_i$ of types $\lambda_i$ 
and the regularized utilities $\tilde u_\ili$ of each player $i$.
In the appendix
we show that when a player~$i$ learns using \bqrealgo{}, the distributions $\vec{\pi}_{i,\lambda}^t$ for any type $\lambda_i\in\Lambda_i$ are no-regret with respect to the regularized utilities $\tilde u_\ili$ defined in~\eqref{eq:regularized C}.
Formally:
\begin{theorem}[abridged]\label{thm:log regret main}
    Let $\bdU$ be a bound on the maximum absolute value of any payoff in the game, and $\bdT \defeq \frac{1}{n_i}\sum_{a\in A_i} \log \vec\tau_i(a)$. Then, for any player $i$, type $\lambda_i\in\Lambda_i$, and number of iterations $T$, the regret cumulated can be upper bounded as
    \[
        \max_{\vec\pi\in\Delta(A_i)}\mleft\{\sum_{t=1}^T \tilde u_\ili(\vec\pi, \vec{a}_{-i}^t) - \tilde u_\ili(\vec\pi_{i,\lambda_i}^t, \vec{a}_{-i}^t)\mright\} \le \frac{\bdU^2}{4} \min\mleft\{\frac{2\log T}{\lambda_i}, T\eta \mright\} + \frac{\log n_i}{\eta} + \rho_\ili,
    \]
    where the game constant $\rho_\ili$ is defined as $\rho_\ili \defeq \lambda_i(\log n_i + \bdT)$.
\end{theorem}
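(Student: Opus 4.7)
The plan is to recognize the trajectory of policies $\vec\pi^t_{i,\lambda_i}$ produced by \bqrealgo{} (restricted to a single fixed type $\lambda_i\in\Lambda_i$) as the iterates of follow-the-regularized-leader (FTRL) applied to the stream of regularized utilities $\tilde u_\ili(\,\cdot\,,\vec a^t_{-i})$, with a fixed negative-entropy regularizer of strength $1/\eta$. First I would verify the algebraic identification: using $\tilde u_\ili(\pi,\vec a_{-i}) = u_i(\pi,\vec a_{-i}) - \lambda_i \KL{\pi}{\vec\tau_i}$ together with the expansion $\KL{\pi}{\vec\tau_i} = \sum_a \pi(a)\log\pi(a) - \sum_a \pi(a)\log\tau_i(a)$, I would combine the entropic parts arising from $\lambda_i\KL{\pi}{\vec\tau_i}$ with the external entropy regularizer and check that the first-order optimality condition for
\[
\argmax_{\pi\in\Delta(A_i)}\mleft\{\sum_{s<t}\tilde u_\ili(\pi,\vec a^s_{-i}) \;-\; \frac{1}{\eta}\sum_{a\in A_i}\pi(a)\log\pi(a)\mright\}
\]
recovers exactly the closed-form softmax update in Line~\ref{line:distribution}, provided $\temp_t = 1/(\eta t)$ as noted in the paper.

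Given this identification, the theorem reduces to a standard FTRL regret bound for strongly-convex losses. The relevant convexity facts are: negative entropy is $1$-strongly-convex on $\Delta(A_i)$ with respect to $\|\cdot\|_1$, and $\KL{\pi}{\vec\tau_i}$ differs from negative entropy only by a term linear in $\pi$, so it is also $1$-strongly-convex in $\|\cdot\|_1$. Consequently each per-round loss $L_t(\pi)\defeq -\tilde u_\ili(\pi,\vec a^t_{-i})$ is $\lambda_i$-strongly-convex, while the external regularizer $\Phi(\pi) = (1/\eta)\sum_a \pi(a)\log\pi(a)$ is $(1/\eta)$-strongly-convex, so the cumulative FTRL objective $F_{t+1}(\pi) = \sum_{s\le t} L_s(\pi) + \Phi(\pi)$ is $(1/\eta + t\lambda_i)$-strongly-convex. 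The standard ``be-the-leader / stability'' argument, combined with optimality of $\vec\pi^t_{i,\lambda_i}$ for $F_t$, then yields
\[
\sum_{t=1}^T \mleft[L_t(\vec\pi^t_{i,\lambda_i}) - L_t(\pi^\star)\mright] \;\le\; \Phi(\pi^\star) - \Phi(\vec\pi^1_{i,\lambda_i}) + \sum_{t=1}^T \frac{\|u_i(\cdot,\vec a^t_{-i})\|_\infty^2}{2(1/\eta + t\lambda_i)},
\]
where the gradient entering the stability term is that of the \emph{linear} (in $\pi$) part of $L_t$ only, bounded by $\bdU$, while the unbounded KL gradient is absorbed into the strong convexity of the cumulative regularizer.

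From here the final bound follows by routine simplifications. The stability sum is bounded by dropping whichever of $1/\eta$ or $t\lambda_i$ is larger in the denominator, producing $(\bdU^2/4)\min\{T\eta,\,(2\log T)/\lambda_i\}$. The initial-regularizer gap satisfies $\Phi(\pi^\star) - \Phi(\vec\pi^1_{i,\lambda_i}) \le \log n_i/\eta$, because $\vec\pi^1_{i,\lambda_i}$ is the uniform distribution (the $t=1$ iterate under vanishing cumulative utility) and negative entropy takes values in $[-\log n_i, 0]$ on the simplex. The remaining constant $\rho_\ili = \lambda_i(\log n_i + \bdT)$ equals $-\lambda_i\KL{\mathrm{Unif}}{\vec\tau_i}\le 0$; it appears when one carefully keeps track of the residual KL contribution of the comparator $\pi^\star$ against the uniform initial iterate, and vanishes precisely when $\vec\tau_i$ is uniform.

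The main obstacle I anticipate is the stability step: a naive FTRL analysis bounds per-iterate drift using $\|\nabla L_t\|_\infty$, but here $\nabla_\pi L_t = -u_i(\cdot,\vec a^t_{-i}) + \lambda_i\mleft(\mathbf{1} + \log\pi - \log\vec\tau_i\mright)$ is unbounded near the boundary of the simplex. The fix is to work with $F_{t+1}$ directly and exploit its $(1/\eta + t\lambda_i)$-strong convexity, combined with optimality of $\vec\pi^t_{i,\lambda_i}$ for $F_t$, to control $\|\vec\pi^t_{i,\lambda_i} - \vec\pi^{t+1}_{i,\lambda_i}\|_1$ in terms of only the bounded ``linear shift'' $\|u_i(\cdot,\vec a^t_{-i})\|_\infty \le \bdU$. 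This is the standard maneuver for FTRL under strongly-convex losses, made slightly delicate here by the infinite-gradient behavior of the KL term.
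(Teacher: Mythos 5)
Your proposal follows essentially the same route as the paper's proof: both identify the iterates as FTRL on the regularized utilities with a $1/\eta$-scaled entropy regularizer, exploit the $(\lambda_i t + 1/\eta)$-strong convexity of the cumulative objective so that only the bounded linear part $\vu_i^t$ enters the stability term (the unbounded KL gradient being controlled via convexity of $\pi\mapsto\KL{\pi}{\vec\tau_i}$, whose per-step increments telescope), and then bound $\sum_{t=1}^T 1/(\lambda_i t + 1/\eta)$ by $\min\{2\log T/\lambda_i,\, T\eta\}$. The only discrepancies are cosmetic: your intermediate stability sum has $2(\lambda_i t+1/\eta)$ in the denominator where the paper's Young-inequality argument yields $4(\lambda_i t + 1/\eta)$, and the constant $\rho_\ili$ arises from the telescoped anchor-KL of the iterates, i.e.\ $\lambda_i\,\KL{\vec\pi^1_{i,\lambda_i}}{\vec\tau_i}$ evaluated at the uniform first iterate, rather than from a residual term involving the comparator.
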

The traditional analysis of FTRL is not applicable to \bqrealgo{} because
the utility functions, as well as their gradients, can be unbounded due to the nonsmoothness of the regularization term $-\lambda_i \KL{\vx}{\vec\tau_i}$ that appears in the regularized utility function $\tilde u_\ili$, and therefore a more sophisticated analysis needs to be carried out. Furthermore, even in the special case of a single type (\emph{i.e.}, a singleton set $\Lambda_i$), where \bqrealgo{} coincides with \qrealgo{}, the above guarantee significantly refines the analysis of \qrealgo{} in two ways. First, it holds no matter the choice of stepsize $\eta > 0$, thus implying a $O(\log T/(T\lambda_i))$ regret bound without assumptions on $\eta$ other than $\eta = \Omega(1)$. Second, in the cases in which $\lambda_i$ is tiny, by choosing $\eta = \Theta(1/\sqrt{T})$ we recover a sublinear guarantee (of order $\sqrt{T}$) on the regret.

In 2p0s games, the logarithmic regret of~\cref{thm:log regret main} immediately implies that the \emph{average policy} $\bar\vx^T_\ili \defeq \frac{1}{T}\sum_{t=1}^T \vx^t_\ili$ of each player $i$ is a $\frac{C\log T}{T}$-approximate Bayes-Nash equilibrium strategy. In fact, a strong guarantee on the \textit{last-iterate} convergence of the algorithm can be obtained too:
\begin{theorem}[abridged; Last-iterate convergence of \pikl in 2p0s games]\label{thm:last iterate convergence main}
    When both players in a 2p0s game learn using \dilpikl for $T$ iterations, their policies converge almost surely to the unique Bayes-Nash equilibrium $(\vx^*_{i,\lambda_i})$ of the regularized game defined by utilities $\tilde u_\ili$~\eqref{eq:regularized C}.
\end{theorem}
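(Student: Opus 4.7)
My plan is to prove last-iterate convergence in two stages: first I would establish that the regularized Bayesian game has a unique Bayes--Nash equilibrium, and then I would show that the \dilpikl iterates, viewed as stochastic follow-the-regularized-leader dynamics over that Bayesian game, converge almost surely to that equilibrium via a Lyapunov argument based on KL divergence.

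For the first stage, I would note that for every fixed type $\lambda_i > 0$ the regularized utility $\tilde u_\ili(\vx_i, \vx_{-i})$ is strictly concave in $\vx_i$, because the penalty $-\lambda_i\KL{\vx_i}{\vec\tau_i}$ is strictly concave on the probability simplex. Embedding the type sets $\Lambda_1,\Lambda_2$ into a Bayesian game and stacking the type-indexed policies, the joint regularized payoff satisfies Rosen's diagonal strict concavity condition; combined with the 2p0s structure this yields a strictly concave--convex saddle point with a unique solution, which serves as the convergence target $(\vx^*_\ili)$.

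For the second stage, I would study the Lyapunov function
\[
  \pot^t \;\defeq\; \sum_{i\in\{1,2\}}\avg{i}{\KL{\vx^*_\ili}{\vx^t_\ili}},
\]
which is nonnegative and vanishes only at the equilibrium. Expanding $\E[\pot^{t+1}\mid\mathcal F^t]$ using the explicit softmax update of Line~\ref{line:distribution}, the linear drift term cancels because $\vx^*_\ili$ satisfies the first-order BNE optimality condition against the opponent's type-averaged mixture and the game is zero-sum; the KL penalty then contributes a negative term of order $-c\,\pot^t$ reflecting strong concavity of the regularizer; and the remaining stochastic fluctuations (from sampling $\lambda$ from $\beta_i$ and from the opponent sampling $\vec a^t_{-i}$) have conditional variance decaying at rate $O(1/t^2)$ thanks to the averaging built into the $Q^t$ update. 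The resulting almost-supermartingale inequality would let me invoke the Robbins--Siegmund convergence theorem to conclude $\pot^t \to 0$ almost surely, and Pinsker's inequality would then upgrade this to $\vx^t_\ili \to \vx^*_\ili$ almost surely for every player $i$ and every type $\lambda_i\in\Lambda_i$.

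The hard part will be handling the singularity of the KL regularizer at the boundary of the simplex, since both the Lyapunov drift and the noise terms feature $\log\vx^t_\ili(a)$ and are only locally bounded. To deal with this I would first prove a uniform deterministic lower bound $\vx^t_\ili(a)\ge \exp(-C(\lambda_i))$ for all $t$ and all $a\in A_i$, by exploiting the closed-form exponential update of Line~\ref{line:distribution}: its numerator $Q^{t-1}(a)+\lambda_i\log\vec\tau_i(a)$ is uniformly bounded in absolute value using $|u_i|\le \bdU$ from Theorem~\ref{thm:log regret main} together with the full-support assumption on $\vec\tau_i$, while the denominator $\temp_{t-1}+\lambda_i$ is bounded above uniformly in $t$. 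This confines the dynamics to a compact subset of the relative interior of the simplex on which the KL gradient and all stochastic increments are uniformly bounded, thereby making the standard stochastic-approximation machinery applicable.
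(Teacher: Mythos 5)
Your proposal is sound in outline but takes a genuinely different route to almost-sure convergence than the paper. Both arguments rest on the same two pillars: strong monotonicity of the regularized operator (the bilinear terms cancel by the zero-sum structure and the $\lambda_i$-strongly-convex KL penalty supplies the negative drift --- this is exactly the paper's Lemma~\ref{lem:sc}), and the fact that FTRL on \emph{averaged} utilities has effective step size $\eta/(\eta\lambda_i t+1)=O(1/(\lambda_i t))$, which makes the per-step noise contribution summable. The paper, however, does not use Robbins--Siegmund on the unweighted KL: it tracks the \emph{weighted} potential $\pot^t=\sum_i\avg{i}{(\lambda_i(t-1)+1/\eta)\,\KL{\vx^*_\ili}{\vx^t_\ili}+\lambda_i\KL{\vx^t_\ili}{\vec\tau_i}}$, shows via a prox-step identity (\cref{cor:prox step}, \cref{prop:pot increase}) that it grows by at most an $O(1/t)$ deterministic term plus a bounded martingale difference, and then divides by the linearly growing weight; almost-sure convergence is obtained from the $O(1/\sqrt{T})$ high-probability bound (Azuma--Hoeffding) via Borel--Cantelli. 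Your route buys a more direct a.s. statement with standard stochastic-approximation machinery; the paper's route buys explicit rates --- $O(\log T/(\lambda_i T))$ in expectation and $O(1/\sqrt{T})$ with high probability --- which a bare Robbins--Siegmund argument does not give. Two points in your write-up need repair. First, the negative drift is not $-c\,\pot^t$ for a constant $c$: because the effective step size decays like $1/t$, the contraction per step is only $-c_t\,\pot^t$ with $c_t=\Theta(1/t)$, so there is no geometric decay; Robbins--Siegmund still closes the argument since $\sum_t c_t=\infty$ and the noise variances $O(1/t^2)$ are summable, but the claim as stated is too strong. Second, your justification of the interior bound is backwards: to lower-bound $\vx^t_\ili(a)$ you need the spread of the exponent $(Q^{t-1}(a)+\lambda_i\log\tau_i(a))/(\temp_{t-1}+\lambda_i)$ to be bounded \emph{above}, which requires the denominator to be bounded \emph{below} (it is, by $\lambda_i>0$) together with full support of $\vec\tau_i$ (implicit in the paper's constant $\bdT$); the conclusion $\vx^t_\ili(a)\ge\exp(-C(\lambda_i))$ with $C(\lambda_i)=O(\bdU/\lambda_i+|\log\tau_{\min}|)$ is correct, but note it degrades as $\lambda_i\to 0$. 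Finally, your uniqueness argument via diagonal strict concavity is consistent with what the paper implicitly assumes.
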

The last-iterate guarantee stated in \cref{thm:last iterate convergence main} crucially relies on the strong convexity of the regularized utilities, and conceptually belongs with related efforts in showing last-iterate convergence of online learning methods. However, a key difficulty that sets apart~\cref{thm:last iterate convergence main} is the fact that the learning agents observe \emph{sampled} actions from the opponents, which makes the proof of the result (as well as the obtained convergence rate) different from prior approaches.

\vspace{-0.05in}
\section{Description of \botname}
\vspace{-0.05in}
\label{sec:results_diplomacy}

By replacing the equilibrium-finding algorithm used in DORA with \bqrealgo{}, we hypothesize that we can learn a strong and human-compatible policy as well as a value function that can accurately evaluate game states, assuming strong and human-like continuation policies. We call this self-play algorithm \rlbqrealgo{}. We use \rlbqrealgo{} to train value and policy proposal networks and use \bqrealgo{} during test-time search.
\subsection{Training}
Our training algorithm closely follows that of DORA, described in Section~\ref{sec:dora}. The loss functions used are identical to DORA and the training procedure is largely the same, except in place of RM to compute the equilibrium policy $\sigma$ on each turn of a game during self-play, we use DiL-piKL with a $\lambda$ distribution and human imitation anchor policy $\tau$ that is fixed for all of training.
See Appendix \ref{app:rltraining} for a detailed description of differences between DORA and \rlbqrealgo.

\subsection{Test-Time Search}
Following~\cite{bakhtin2021no}, at evaluation time we perform 1-ply lookahead where on each turn we sample up to 30 of the most likely actions for each player from the RL policy proposal network. However, rather than using RM to compute the equilibrium $\sigma$, we apply DiL-piKL.

As also mentioned previously in Section \ref{sec:dilpikl}, while our agent samples $\lambda_i$ from the probability distribution $\beta_i$ when computing the DiL-piKL equilibrium, the agent chooses its own action to actually play using a fixed low $\lambda$.
For all experiments, including all ablations, the agent uses the same BC anchor policy. 
For \bqrealgo{} experiments for each player~$i$ we set
$\beta_i$ to be uniform over $\{10^{-4}, 10^{-3}, 10^{-2}, 10^{-1}\}$ and play according to $\lambda = 10^{-4}$, except for the first turn of the game. On the first turn we instead sample from $\{10^{-2}, 10^{-1.5}, 10^{-1}, 10^{-0.5}\}$ and play according to $\lambda = 10^{-2}$, so that the agent plays more diverse openings, which more closely resemble those that humans play.

\vspace{-0.05in}
\section{Experiments}
\vspace{-0.05in}
\label{sec:mainresult}

We first compare the performance of two variants of \botname in a population of prior agents and other baseline agents. We then report results of \botname playing in a tournament with humans.

\subsection{Experimental setup}
\label{sec:experimentalsetup}

In order to measure the ability of agents to play well against a diverse set of opponents, we play many games between AI agents where each of the seven players are sampled randomly from a population of baselines (listed in~\autoref{app:population}) or the agent to be tested. We report scores for each of the following algorithms against the baseline population:

    \textbf{\botname-Low} and \textbf{\botname-High} are the proposed agents that use \rlbqrealgo{} during training with 2 player types $\{10^{-4}, 10^{-1}\}$ and $\{10^{-2}, 10^{-1}\}$, respectively.\\
    \textbf{DORA} is an agent that is trained via self-play and uses RM as the search algorithm during training and test-time. Both the policy and the value function are randomly initialized at the start of training.\\
    \textbf{DNVI} is similar to DORA, but the policy proposal and value networks are initialized from human BC pretraining.\\
    \textbf{DNVI-NPU} is similar to DNVI, but during training only the RL value network is updated. The policy proposal network is still trained but never fed back to self-play workers, to limit self-play drift from human conventions. The final RL policy proposal network is only used at the end, at test time (along with the RL value network).\\
    \textbf{BRBot} is an approximate best response to the BC policy.
    It was trained the same as \botname, except that during training the agent plays one distinguished player each game with $\lambda = 0$ while all other players use $\lambda \approx \infty$. \\
    \textbf{SearchBot}, a one-step lookahead equilibrium search agent from \citep{gray2020human}, evaluated using their published model.\\ 
    \textbf{HedgeBot} is an agent similar to SearchBot~\citep{gray2020human} but using our latest architecture and using hedge rather than RM as the equilibrium-finding algorithm.\\
    \textbf{FPPI-2} and \textbf{SL} are two agents from \citep{anthony2020learning}, evaluated using their published model.

After computing these population scores, as a final evaluation we organized a tournament where we evaluated four agents for 50 games each in a population of online human participants. We evaluated two baseline agents, BRBot and DORA, and two of our new agents, \botname-Low and \botname-High.

In order to limit the duration of games to only a few hours, these games used a time limit of 5 minutes per turn and a stochastic game-end rule where at the beginning of each game year between 1909 and 1912 the game ends immediately with 20\% chance per year, increasing in 1913 to a 40\% chance.
Players were not told which turn the game would end on for a specific game, but were told the distribution it was sampled from. Our agents were also trained based on this distribution.\footnote{Games were run by a third-party contractor. In contradiction of the criteria we specified, the contractor ended games artificially early for the first $\sim$80 games played in the tournament, with end dates of 1909-1911 being more common than they should have been. We immediately corrected this problem once it was identified.}
Players were recruited from Diplomacy mailing lists and from \url{webdiplomacy.net}. In order to mitigate the risk of cheating by collusion, players were paid hourly rather than based on in-game performance. Each game had exactly one agent and six humans. The players were informed that there was an AI agent in each game, but did not know which player was the bot in each particular game. In total 62 human participants played 200 games with 44 human participants playing more than two games and 39 human participants playing at least 5 games.

\subsection{Experimental Results}

\begin{table}[t]
    \centering
\begin{tabular}{lr}
\toprule
\textbf{Agent} & {\textbf{ Score against population}} \\
\midrule

\botname-Low & {29\% $\pm$ 1\%}  \\
\botname-High & 28\% $\pm$ 1\%  \\
DNVI-NPU~(retrained)~\citep{bakhtin2021no} & 20\% $\pm$ 1\%  \\
BRBot & 18\% $\pm$ 1\%  \\
DNVI~(retrained)~\citep{bakhtin2021no} & 15\% $\pm$ 1\% \\
HedgeBot~(retrained)~\citep{jacob2022modeling}  & 14\% $\pm$ 1\% \\
DORA~(retrained)~\citep{bakhtin2021no} & 13\% $\pm$ 1\%  \\
\midrule
FPPI-2~\citep{anthony2020learning} & 9\% $\pm$ 1\% \\
SearchBot~\citep{gray2020human} & 7\% $\pm$ 1\% \\
SL~\citep{anthony2020learning}  & 6\% $\pm$ 1\% \\
\bottomrule
\end{tabular}
\vspace{-0.05in}
\caption{\small
    \label{tab:pop_all}Performance of different agents in a population of various agents. Agents above the line were trained using identical neural network architectures. Agents below the line were evaluated using the models and the parameters provided by the authors.
    The $\pm$ shows one standard error.
    }
\vspace{-0.05in}
\end{table}

We first report results for our agents
in the fixed population described in Appendix~\ref{app:population}.
The results, shown in Table~\ref{tab:pop_all}, show \botname-Low and \botname-High perform the best by a wide margin. 

We next report results for the human tournament in Table~\ref{tab:main_result}. For each listed player, we report their average score, Elo rating, and rank within the tournament based on Elo among players who played at least 5 games.
Elo ratings were computed using a standard generalization of BayesElo~\citep{coulom2005bayeselo} to multiple players~\citep{hunter2004mmbt} (see~\autoref{app:bayes_elo} for details).
This gives similar rankings as average score, but also attempts to correct for both the average strength of the opponents, since some games may have stronger or weaker opposition, as well as for which of the seven European powers a player was assigned in each game, since some starting positions in Diplomacy are advantaged over others. To regularize the model, a weak Bayesian prior was applied such that each player's rating was normally distributed around 0 with a standard deviation of around 350 Elo.

The results show that \botname-High performed best among all the humans by both Elo and average score. \botname-Low followed closely behind, ranking second according to average score and third by Elo. BRBot performed relatively well, but ended ranked below that of both DiL-piKL agents and several humans. DORA performed relatively poorly.

Two participants achieved a higher average score than the \botname agents, a player averaging 35\% but who only played two games, and a player with a score of 29\% who played only one game.

We note that given the large statistical error margins, the results in Table~\ref{tab:main_result} do not conclusively demonstrate that \botname outperforms the best human players, nor do they alone demonstrate an unambiguous separation between \botname and BRBot. However, the results do indicate that \botname performs at least at the level of expert players in this population of players with diverse skill levels. Additionally, the superior performance of both \botname agents compared to BRBot is consistent with the results from the agent population experiments in Table \ref{tab:pop_all}.

\begin{table}[t]
    \centering
\begin{tabular}{lrrrr}
\toprule
        &  {\textbf{Rank}} &  \textbf{Elo} &       \textbf{Avg Score} &  \textbf{\# Games} \\
\midrule
\textbf{\botname-High} &      1 &  181 & 27\% $\pm$ 4\% &     50 \\
      Human &      2 &  162 & 25\% $\pm$ 6\% &     13 \\
\textbf{\botname-Low} &      3 &  152 & 26\% $\pm$ 4\% &     50 \\
      Human &      4 &  138 & 22\% $\pm$ 9\% &      7 \\
      Human &      5 &  136 & 22\% $\pm$ 3\% &     57 \\
         \textbf{BRBot} &      6 &  119 & 23\% $\pm$ 4\% &     50 \\
      Human &      7 &  102 & 18\% $\pm$ 8\% &     8 \\
      Human &      8 &  96 & 17\% $\pm$ 3\% &     51 \\
$\cdots$ & $\cdots$ & $\cdots$ & $\cdots$ & $\cdots$ \\
       \textbf{DORA} &     32 &  -20 & 13\% $\pm$ 3\% &     50 \\
$\cdots$ & $\cdots$ & $\cdots$ & $\cdots$ & $\cdots$ \\
      Human &     43 & -187 &  1\% $\pm$ 1\% &      7 \\
\bottomrule
\end{tabular}
\vspace{-0.05in}
\caption{\small
    \label{tab:main_result}Performance of four different agents in a population of human players, ranked by Elo, among all 43 participants who played at least 5 games. The $\pm$ shows one standard error. 
    }
\vspace{-0.05in}
\end{table}

In addition to the tournament, we asked three expert human players to evaluate the strength of the agents in the tournament games based on the quality of their actions. Games were presented to these experts with anonymized labels so that the experts were \emph{not} aware of which agent was which in each game when judging that agent's strategy.
All the experts picked a \botname agent as the strongest agent, though they disagreed about whether \botname-High or \botname-Low was best.
Additionally, all experts indicated one of the \botname agents as the one they would most like to cooperate with in a game. We provide detailed responses in Appendix~\ref{sec:app:experts}.

\subsection{RL training comparison}
\label{sec:ablations}

Figure \ref{fig:biasing} compares different RL agents across the course of training. To simplify the comparison, we vary the training methods for the value and policy proposal networks, but use the same search setting at evaluation time.

As a proxy for agent strength, we measure the average score of an agent vs 6 copies of HedgeBot. 
As a proxy for modeling humans, we compute prediction accuracy of human moves on a validation dataset of roughly 630 games held out from training of the human BC model, i.e., how often the most probable action under the policy corresponds to the one chosen by a human.
Similar to~\cite{bakhtin2021no}, we found that agents without biasing techniques (DORA and DNVI) diverge from human play as training progress.
By contrast, \botname-High achieves significant improvement in score while keeping the human prediction accuracy high.

\begin{figure}
    \centering
    \includegraphics[scale=0.45]{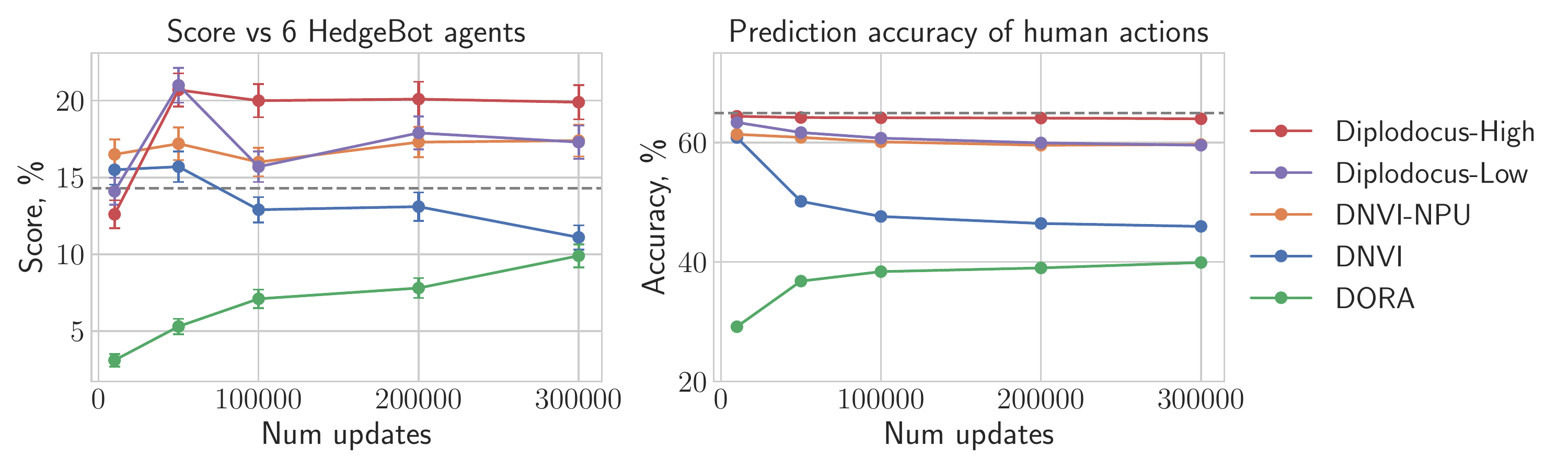}
    \vspace{-0.2cm}
    \caption{\small Performance of different agents as a function of the number of RL training steps. \textbf{Left:} Scores against 6 human-like HedgeBot agents. The gray dotted line at score $1/7 \approx 14.3\% $ corresponds to tying HedgeBot. The error bars show one standard error. \textbf{Right:} Order prediction accuracy of each agent's raw RL policy on a held-out set of human games. The gray dotted line corresponds to the behavioral cloning policy. \textbf{Overall:} \botname-High achieves a high score while also maintaining high prediction accuracy. Unregularized agents DNVI and DORA do far worse on both metrics.
    }
    \label{fig:biasing}
\end{figure}

\vspace{-0.05in}
\section{Discussion}
\vspace{-0.05in}

In this work we describe \rlbqrealgo and use it to train an agent for no-press Diplomacy that placed first in a human tournament. We ascribe \botname's success in Diplomacy to two ideas.

First, DiL-piKL models a population of player types with different amounts of regularization to a human policy while ultimately playing a strong (low-$\lambda$) policy itself. This improves upon simply playing a best response to a BC policy by accounting for the fact that humans are less likely to play highly suboptimal actions and by reducing overfitting of the best response to the BC policy.
Second, incorporating DiL-piKL in self-play allows us to learn an accurate value function in a diversity of situations that arise from strong and human-like players. Furthermore, this value assumes a human continuation policy that makes fewer blunders than the BC policy, allowing us to correctly estimate the values of positions that require accurate play (such as stalemate lines).

In conclusion, combining human imitation, planning, and RL presents a promising avenue for building agents for complex cooperative and mixed-motive environments. Further work could explore regularized search policies that condition on more complex human behavior, including dialogue.

\bibliography{references}
\bibliographystyle{iclr2023_conference}

\appendix
\section{Author Contributions}

A. Bakhtin primarily contributed to RL, infrastructure, experimentation, and direction. D. J. Wu primarily contributed to RL and infrastructure. A. Lerer primarily contributed to \bqrealgo{}, infrastructure, and direction. J. Gray primarily contributed to infrastructure. A. P. Jacob primarily contributed to \bqrealgo{} and experimentation. G. Farina primarily contributed to theory and \bqrealgo{}. A. H. Miller primarily contributed to experimentation. N. Brown primarily contributed to \bqrealgo{}, experimentation, and direction.
\section{Description of Diplomacy}
\label{appendix:rules_diplomacy}
The rules of no-press Diplomacy are complex; a full description is provided by~\citet{paquette2019no}.
No-press Diplomacy is a seven-player zero-sum board game in which a map of Europe is divided into 75 provinces. 34 of these provinces contain supply centers (SCs), and the goal of the game is for a player to control a majority (18) of the SCs. Each players begins the game controlling three or four SCs and an equal number of units.

The game consists of three types of phases: movement phases in which each player assigns an order to each unit they control, retreat phases in which defeated units retreat to a neighboring province, and adjustment phases in which new units are built or existing units are destroyed.

During a movement phase, a player assigns an order to each unit they control. A unit's order may be to hold (defend its province), move to a neighboring province, convoy a unit over water, or support a neighboring unit's hold or move order. Support may be provided to units of any player. We refer to a tuple of orders, one order for each of a player's units, as an \textbf{action}. That is, each player chooses one action each turn. There are an average of 26 valid orders for each unit~\citep{paquette2019no}, so the game's branching factor is massive and on some turns enumerating all actions is intractable.

Importantly, all actions occur simultaneously. In live games, players write down their orders and then reveal them at the same time. This makes Diplomacy an imperfect-information game in which an optimal policy may need to be stochastic in order to prevent predictability.

Diplomacy is designed in such a way that cooperation with other players is almost essential in order to achieve victory, even though only one player can ultimately win.

A game may end in a draw on any turn if all remaining players agree. Draws are a common outcome among experienced players because players will often coordinate to prevent any individual from reaching 18 centers. The two most common scoring systems for draws are \textbf{draw-size scoring (DSS)}, in which all surviving players equally split a win, and \textbf{sum-of-squares scoring (SoS)}, in which player~$i$ receives a score of $\frac{C_i^2}{\sum_{j \in \mathcal{N}} C_j^2}$, where $C_i$ is the number of SCs that player~$i$ controls~\citep{fogel2020whom}. Throughout this paper we use SoS scoring except in anonymous games against humans where the human host chooses a scoring system.


\section{Expert evaluation of the agents}\label{sec:app:experts}

The anonymous format of the tournament aimed at reducing possible biases of players towards the agent, e.g., trying to collectively eliminate the agents as targeting the agent is a simple way to break the symmetry.
At the same time a significant property of Diplomacy is knowing the play styles of different players and using this knowledge to make decision of whom to trust and whom to chose as an ally.
To evaluates this aspect of the game play we asked for qualitative feedback from three Diplomacy experts.
Each player was given 7 games (one per power) from each of the 4 different agents that played in the tournament. The games evaluated by each expert were disjoint from the games evaluated by the other experts.
The games were anonymized such that the experts were not able to tell which agent played in the game based on the username or from the date.
We asked a few questions about the game play of each agent independently and then asked the experts to choose the best agent for strength and human-like behavior.
The experts referred to the agents as Agent1, ..., Agent4, but we de-anonymized the agents in the answers below.

\subsection{Overall}
\subsubsection*{What is the strongest agent?}
\paragraph{Expert 1}
I think \botname-Low was the strongest, then BRBot closely followed by \botname-High. DORA is a distant third.
\paragraph{Expert 2}

\botname-High.

\paragraph{Expert 3}
\botname-Low. This feels stronger than a human in certain ways while still being very human-like.

\subsubsection*{What is the most human-like/bot-like agent?}
\paragraph{Expert 1}
Most human-like is \botname-High. A boring human, but a human nonetheless. \botname-Low is not far behind, then BRBot and DORA both of which are very non-human albeit in very different ways.

\paragraph{Expert 2}

\botname-High.

\paragraph{Expert 3}
\botname-Low.

\subsubsection*{What is the agent you’d like to cooperate with?}
\paragraph{Expert 1}
This is the most interesting question. I think \botname-Low, because I like how it plays - we’d “vibe” - but also because I think it is quite predictable in what motivates it to change alliances. That’s potentially exploitable, even with the strong tactics it has. I’d least like to work with \botname-High as it seems to be very much in it for itself. I suspect it would be quite unpleasant to play against as it is tactically excellent and seems hard to work with.

I’d love to be on a board with DORA, as I’d expect my chances to solo to go up dramatically! It would be a very fun game so long as you weren’t on the receiving end of some of its play.

\paragraph{Expert 2}
\botname-High.

\paragraph{Expert 3}
\botname-Low. \botname-High is also strong, but seems much less interesting to play with, because of the way it commits to alliances without taking into account who is actually willing to work with it. This limits what a human can do to change their situation quite a lot and would be fairly frustrating in the position of a neighbour being attacked by it.

BRBot and DORA feel too weak to be particularly interesting.

\subsection{DORA}
\subsubsection*{How would you evaluate the overall strength of the agent?}
\paragraph{Expert 1}
Not great. There’s a lot to criticize here - from bad opening play (Russia = bonkers), to poor defense (Turkey) and just generally bad tactics and strategy compared to the other agents (France attacking Italy when Italy is their only ally was an egregious example of this). 

\paragraph{Expert 2}
Very weak. Seemed to invite its own demise with the way it routinely picked fights in theaters it had no business being in and failing to cooperate with neighbors

\paragraph{Expert 3}
Poor. It is bad at working with players, and it makes easily avoidable blunders even when working alone.

\subsubsection*{How would you evaluate the ability of the agent to cooperate?}
\paragraph{Expert 1}
It seems to make efforts, but it also seems to misjudge what humans are likely to do. There’s indicative support orders and they’re pretty good, but it also doesn’t seem to understand or account for vindictiveness over always playing best. The Turkey game where it repeatedly seems to expect Russia to not attack is an example of this.

\paragraph{Expert 2}
Poor. Seemed to pick fights without seeing or soliciting support necessary to win, failed to support potential allies in useful ways to take advantage of their position.

\paragraph{Expert 3}
Middling to Poor. It very occasionally enters good supports but it often enters bad ones, and has a habit of attacking too many people at once (and not considering that attacking those people will turn them against it). It has a habit of annoying many players and doing badly as a result.

\subsection{BRBot}
\subsubsection*{How would you evaluate the overall strength of the agent?}
\paragraph{Expert 1}
The agent has solid, at least human level tactics and clearly sees opportunities to advance and acts accordingly. Sometimes this is to the detriment of the strategic position, but the balance is fair given the gunboat nature of the games. Overall, the bot feels naturally to be in the “better than average human” range rather than super-human, but the results indicate that it performs at a higher level than the “feeling” it gives. It has a major opportunity for improvement, discussed in the next point.
\paragraph{Expert 2}
Overall, seemed fairly weak and seemed to be able to succeed most frequently when benefiting from severe mistakes from neighboring agents. That being said it was able to exploit those mistakes somewhat decently in some cases and at least grow to some degree off of it.

\paragraph{Expert 3}
Middling. It is tactically strong when not having to work with other players and when it has a considerable number of units, but is quite weak when attempting to cooperate with other players. Its defensive strength varies quite significantly too, possibly also based on unit count - when it had relatively few units it missed very obvious defensive tactics.

\subsubsection*{How would you evaluate the ability of the agent to cooperate?}
\paragraph{Expert 1}
The bot is hyperactively trying to coordinate and signal to the other players that it wants to work with them. Sometimes this is in the form of ridiculous orders that probably indicate desperation more than a mutually beneficial alliance, and this backfires as you may expect. At its best it makes exceptional signaling moves (RUSSIA game\footnote{DOUBLE BLIND }:
War - Mos in Fall 1901 is exceptional) but at worst it is embarrassingly bad and leads to it getting attacked (TURKEY game\footnote{DOUBLE BLIND }:
supporting convoys from Gre - Smy or supporting other powers moving to Armenia). The other weakness is that it tends to make moves like these facing all other powers - this is not optimal as indicating to all other powers that you want to work with them is equivalent to not indicating anything at all - if anything it seems a little duplicitous. This is especially true when the bot is still inclined to stab when the opportunity presents itself, which means the signaling is superficial and unlikely to work repeatedly.
Overall, the orders show the ability to cooperate, signal, and work together, but the hyperactivity of the bot is limiting the effectiveness of the tools to achieve the best results.

\paragraph{Expert 2}
Poor. Random support orders seemed to be thrown without an overarching strategy behind them. Moves didn’t seem to suggest long term thoughts of collaboration.

\paragraph{Expert 3}
Poor. When attempting to work with another player, it almost always gives them the upper hand, and even issues supports that suggest it is okay with that player taking its SCs when it should not be.
It sometimes matches supports to human moves, but does not seem to do this very often. The nonsensical supports are much more common.

\subsection{\botname-High}
\subsubsection*{How would you evaluate the overall strength of the agent?}
\paragraph{Expert 1}
The tactics are unadventurous and sometimes seem below human standards (for example, the train of army units in the Italy game; the whole Turkey game) but conversely they also have a longer view of the game (see also: Italy game - the trained bounces don’t matter strategically). There’s less nonsense too; if I were to sum the bot up in two words it would be “practical” and “boring”.

\paragraph{Expert 2}
Seemed to be strong. Wrote generally good tactical orders, showed generally good strategic sense. Showed discipline and a willingness to let allies survive in weak positions while having units that could theoretically stab for dots with ease remaining right next to that weak ally.

There were some highly questionable moments as both Italy and France early on in 1901 strategy which seemed to heavily harm their ability to get out of the box.

The Austrian game was particularly impressive in terms of its ability to handle odd scenarios and achieve the solo despite receiving pressure on multiple occasions on multiple fronts.

\paragraph{Expert 3}
Generally strong. It is good at signalling and forming alliances, is tactically strong when in its favoured alliance, and is especially strong when ahead. Its main weakness seems to be an inability to adapt - if its favoured alliance is declined, it will often keep trying to ‘pitch’ that same alliance instead of working towards alternatives.

\subsubsection*{How would you evaluate the ability of the agent to cooperate?}
\paragraph{Expert 1}
Low. It doesn’t put much effort into this. The French game, for example, the bot just seems to accept it is being attacked and fight through it. It’s so boring and tactical and shows little care for cooperation. Many great gunboat players do this but it will not hold up in press games. What it does seem to do is capitalize on other player’s mistakes - see the Austrian game where it sneaks into Scandinavia and optimizes to get to 18 (there can’t be a lot of training data for that!).

\paragraph{Expert 2}
Very strong ability to cooperate as seen in the Turkish game, but in other games seemed to try and pick fights against the entire world in ways that were ultimately self-defeating.

\paragraph{Expert 3}
Good. It can work well with human players, matching supports and even using signalling supports in ways humans would. It frequently attempts to side with a player who is attacking it, though, so it seems to have a problem with identifying which player to work with.

\subsection{\botname-Low}
\subsubsection*{How would you evaluate the overall strength of the agent?}
\paragraph{Expert 1}
Exceptional. Very strong tactics and a clear directionality to what it does - it seems to understand what the strategic value of a position is and it acts with efficiency to achieve the strategic goals. It has great results (time drawn out of a few wins!) but also fights back from “losing” positions extremely well which makes it quantifiably strong, but it also just plays a beautiful and effective game. Very strong indeed. 
It does sometimes act too aggressively for tournament play (Austria is the example where this came home to roost) - the high risk/reward plays are generally but not always correct in single games, but for tournament play it goes for broke a bit too much (This is outside the scope of the agent I suspect, as it is playing to scoring system not tournament scoring system). Against human players who may not see the longer term impact of their play, it results in games like this one… which is ugly both for Austria and for everyone else except Turkey. 

\paragraph{Expert 2}
Very weak. Seemed to abandon its own position in many cases to pursue questionable adventures. Sometimes they worked out but generally they failed, resulting in things like a Germany under siege holding Edi while they as England are off in Portugal and are holding onto their home centers only because FG were under siege by the south.

\paragraph{Expert 3}
Very strong. It can signal alliances very well and generally chooses the correct allies, seems strong tactically even on defence, and makes some plays you would not expect from a human player but which are outright stronger than a human player would make.

\subsubsection*{How would you evaluate the ability of the agent to cooperate?}
\paragraph{Expert 1}
Pretty good. It sends signaling moves and makes efforts to support other players quite a lot (see in particular Russia). I particularly like the skills being shown to work together tactically and try and support other units - this is both effective and quite human. This is my favorite bot by some distance when it comes to cooperating with the other players.
There is a weakness in that it does seem to reassess alliances every turn, which means sometimes the excellent work indicating and supporting is undone without getting the chance to realize the gains (Examples with Russia and Italy). 

\paragraph{Expert 2}
Poor. Didn’t seem to give meaningful support orders when they would have helped and gave plenty of meaningless signaling supports and some questionable ones like supporting the English into SKA in F1901 as Germany among other oddities

\paragraph{Expert 3}
Good. It signals alliances in very human ways, through clear signalling builds, accurate support moves where it makes sense, and support holds otherwise. It also seems to match supports with its allies well.

\section{Population based evaluation}\label{app:population}

In general-sum games like Diplomacy, winrate in head-to-head matches against a previous version of an agent may not be as informative because of nontransitivity between agents. For example, exploitative agents such as best-response-to-BC may do particularly well against BC or other pure imitation-learning agents, and less well against all other agents. Additionally, \cite{bakhtin2021no} found that a pair of independently and equally-well-trained RL agents may each appear very weak in a population composed of the other due to converging to incompatible equilibria. Many agents also varied significantly in how well they performed against other search-based agents.

Therefore, we resort to playing against a population of previously training agents as was done in~\cite{jacob2022modeling}, intended to measure more broadly how well an agent does on average against a wider suite of various human-like agents. 

More precisely, we define a fixed set of baseline agents as a population. To determine an agent's average population score, we add that agent into the population and then play games where in each game, all 7 players are uniformly randomly sampled from the population with replacement, keeping only games where the agent to be tested was sampled at least once. Note that unlike \cite{jacob2022modeling}, we run a separate population test for each new agent to be tested, rather than combining all agents to be tested within a single population.

For the experiments in \autoref{tab:pop_all} and \autoref{sec:experimentalsetup} we used the following 8 baseline agents:
\begin{itemize}
    \item An single-turn BR agent that assumes everyone else plays the BC policy.
    \item An agent doing RM search with BC policy and value functions. We use 2 copies of this agent trained on different subsets of data.
    \item \bqrealgo agent with BC policy and value functions. We use 4 different versions of this data with different training data and model architecture.
    \item \bqrealgo agent where the policy and value functions are trained with self-play with Reinforced-PiKL with high lambda ($\lambda=3\times 10^{-2}$).
\end{itemize}

For the experiments in this paper we used 1000 games for each such population test.
\section{Theoretical Properties of DiL-piKL}

\newcommand{\regut}{\tilde{u}^t}

In this section we study the last-iterate convergence of \dilpikl, establishing that in two-player zero-sum games \dilpikl converges to the (unique) Bayes-Nash equilibrium of the regularized Bayesian game. As a corollary (in the case in which each player has exactly one type), we conclude that \pikl converges to the Nash equilibrium of the regularized game in two-player zero-sum games.
We start from a technical result. In all that follows, we will always let $\vu_i^t$ be a shorthand for the vector $(u_i(a, \vec{a}_{-i}^t))_{a\in A_i}$.

\begin{lemma}\label{lem:ftrl to omd}
    Fix any player $i$, $\lambda_i \in \Lambda_i$, and $t \ge 1$. For all $\vx,\vx' \in \Delta(A_i)$, the iterates $\vx_\ili^t$ and $\vx_\ili^{t+1}$ defined in Line~\ref{line:distribution} of Algorithm~\ref{algo:noregret} satisfy
    \[
        \mleft\langle \frac{ \eta }{\eta\lambda_i t + 1}\mleft(-\vu_i^t + \lambda_i \nabla\regu(\vx_\ili^{t}) -     \lambda_i\nabla\regu(\vec\tau_i) \mright) + \nabla \regu(\vx_\ili^{t+1}) - \nabla \regu(\vx_\ili^t), \vx - \vx'\mright\rangle = 0.
    \]
\end{lemma}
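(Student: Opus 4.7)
The plan is to recognize \dilpikl as an instance of online mirror descent with the negative-entropy mirror map $\regu(\vx) := \sum_a \vx(a)\log\vx(a)$, whose gradient is $\nabla\regu(\vx)(a) = \log\vx(a) + 1$. Under this choice, differences $\nabla\regu(\vx)-\nabla\regu(\vec y)$ coincide with componentwise log-ratios up to an additive constant. The claim then reduces to showing that the vector inside the inner product is a scalar multiple of $\vec 1$, which annihilates the simplex difference $\vx-\vx'$.

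Concretely, let $\gamma_s := 1 + \eta\lambda_i s$, so that $\alpha_t := \eta/\gamma_t$ equals the prefactor $\eta/(\eta\lambda_i t + 1)$ appearing in the lemma; also note $\alpha_t = 1/(\temp_t + \lambda_i)$ under the identification $\temp_s = 1/(\eta s)$. The first step is to expand the closed form from Line~\ref{line:distribution} of Algorithm~\ref{algo:noregret},
\[
\log\vx_\ili^{t+1}(a) = \alpha_t\bigl(t\, Q^t(a) + \lambda_i t \log\tau_i(a)\bigr) - \log Z^{t+1},
\]
with an analogous expression for $\vx_\ili^t$ obtained by the replacements $t\mapsto t-1$ and $\alpha_t\mapsto \alpha_{t-1}$. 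Combining these using the running-mean recursion $t Q^t(a) = (t-1)Q^{t-1}(a) + u_i^t(a)$ and the elementary arithmetic identity $\gamma_t - \eta\lambda_i = \gamma_{t-1}$ (which is exactly what collapses the $Q^{t-1}$ and $\log\tau_i$ coefficients), I would verify the pointwise identity
\[
\log\vx_\ili^{t+1}(a) - \log\vx_\ili^t(a) = \alpha_t\bigl(u_i^t(a) - \lambda_i(\log\vx_\ili^t(a) - \log\tau_i(a))\bigr) + c_t,
\]
where $c_t$ is a scalar depending only on $t$ (it absorbs $\log Z^t$ and $\log Z^{t+1}$).

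With this identity in hand, substituting $\log\vx_\ili^s = \nabla\regu(\vx_\ili^s)(a) - 1$ and $\log\tau_i = \nabla\regu(\vec\tau_i) - 1$ (the coordinate-wise $+1$'s absorb into $c_t$) and rearranging gives
\[
\bigl[\nabla\regu(\vx_\ili^{t+1}) - \nabla\regu(\vx_\ili^t)\bigr] + \alpha_t\bigl[-\vu_i^t + \lambda_i\nabla\regu(\vx_\ili^t) - \lambda_i\nabla\regu(\vec\tau_i)\bigr] = c_t\,\vec 1.
\]
Since $\vx,\vx'\in\Delta(A_i)$ both sum to one, $\langle\vec 1,\vx-\vx'\rangle = 0$ and the lemma follows.

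I expect the only nontrivial obstacle to be algebraic bookkeeping: carefully tracking the normalization constants $\log Z^t,\log Z^{t+1}$ and cleanly applying the two cancellations $\gamma_t-\eta\lambda_i=\gamma_{t-1}$ and $(t-1)\gamma_t + 1 = t\,\gamma_{t-1}$ so that the $Q^{t-1}$ and $\log\tau_i$ coefficients on both sides line up exactly. A mild technical caveat is that $\nabla\regu(\vec\tau_i)$ requires $\vec\tau_i$ to have full support, which is standard for the behavioral-cloning anchors used in the paper; the boundary case $t=1$ (where $\temp_0=\infty$ makes $\vx_\ili^1$ uniform) is consistent with the same derivation because the $(t-1)=0$ factor zeroes out $Q^{t-1}$ and the anchor term.
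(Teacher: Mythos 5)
Your proof is correct and arrives at the same identity as the paper, but by a genuinely different route. The paper characterizes $\vx_\ili^t$ and $\vx_\ili^{t+1}$ as the argmaxes of entropy-regularized FTRL objectives built from the averaged utility vectors $\bar\vU_i^{t-1}$ and $\bar\vU_i^t$, writes the first-order optimality (variational-inequality) conditions for both --- valid on the relative interior because negative entropy is Legendre --- subtracts them, and then substitutes the $(t{+}1)$-th optimality condition back in to eliminate $\bar\vU_i^t$. You instead work directly from the explicit softmax closed form in Line~\ref{line:distribution}: taking logarithms, applying the running-mean recursion $tQ^t(a) = (t-1)Q^{t-1}(a) + u_i^t(a)$ together with the cancellation $\gamma_t - \eta\lambda_i = \gamma_{t-1}$, and concluding that the vector inside the inner product equals $c_t\vec{1}$, which is annihilated by $\vx - \vx'$ since both lie in $\Delta(A_i)$. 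The two arguments are mathematically equivalent --- the paper's identity holding for all $\vx,\vx'\in\Delta(A_i)$ is precisely the statement that the relevant vector is a multiple of the all-ones vector --- but yours is more elementary and self-contained (no appeal to Legendre regularizers or optimality conditions), at the cost of more explicit bookkeeping of the normalizers $\log Z^t,\log Z^{t+1}$; the paper's variational version would generalize more readily if a closed form were unavailable. Your treatment of the boundary case $t=1$ and the full-support caveat on $\vec\tau_i$ both match what the paper implicitly requires. One small slip: with $\temp_t = 1/(\eta t)$ one has $1/(\temp_t+\lambda_i) = t\alpha_t$, not $\alpha_t$; your displayed expansion of $\log\vx_\ili^{t+1}$ already carries the correct extra factor of $t$, so nothing downstream is affected.
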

\begin{proof}
    If $t=1$, then the results follows from direct inspection: $\vx_\ili^1$ is the uniform policy (and so $\langle \nabla \regu(\vx^1_\ili), \vx - \vx'\rangle = 0$ for any $\vx,\vx' \in \Delta(A_i)$, and so the statement reduces to the first-order optimality conditions for the problem $\vx^2_\ili = \argmax_{\vx\in\Delta(A_i)}\{-\regu(\vx)/\eta + \langle \vu_i^1, \pi\rangle - \lambda_i \KL{\vx}{\vec\tau_i}\}$. So, we now focus on the case $t \ge 2$.
    The iterates $\vx_\ili^{t+1}$ and $\vx_\ili^t$ produced by \dilpikl are respectively the solutions to the optimization problem
    \begin{align*}
        \vx^{t+1}_\ili &= \argmax_{\vx \in \Delta(A_i)} \mleft\{ -\frac{\regu(\vx)}{\eta t} + \langle\bar\vU_i^t, \vx\rangle -  \lambda_i\, \KL{\vx}{\vec\tau_i} \mright\},\\
          \vx^{t}_\ili &= \argmax_{\vx \in \Delta(A_i)} \mleft\{ -\frac{\regu(\vx)}{\eta (t-1)} + \langle\bar\vU_i^{t-1}, \vx\rangle - \lambda_i\, \KL{\vx}{\vec\tau_i} \mright\},
    \end{align*}
    where we let the averages utility vectors be
    \[
        \bar\vU_i^{t-1} \defeq \frac{1}{t-1}\sum_{t'=1}^{t-1} \vu_i^{t'}, \qquad
        \bar\vU_i^t \defeq \frac{1}{t}\sum_{t'=1}^t \vu_i^{t'}.
    \]

    Since the regularizing function negative entropy $\regu$ is Legendre, the policies $\vx_\ili^{t+1}$ and $\vx_\ili^t$ are in the relative interior of the probability simplex, and therefore the first-order optimality conditions for $\vx_\ili^{t+1}$ and $\vx_\ili^t$ are respectively
    \begin{align*}
        \mleft\langle -\bar\vU_i^t + \lambda_i \nabla \regu(\vx_\ili^{t+1}) - \lambda_i \nabla \regu(\vec\tau_i) + \frac{1}{\eta t}\nabla\regu(\vx_\ili^{t+1}), \vx - \vx' \mright\rangle &= 0~~~ \forall \vx,\vx'\in\Delta(A_i),\numberthis{eq:vi for t plus 1}
    \\
        \mleft\langle -\bar\vU_i^{t-1} + \lambda_i \nabla \regu(\vx_\ili^{t}) - \lambda_i \nabla \regu(\vec\tau_i) + \frac{1}{\eta (t-1)}\nabla\regu(\vx_\ili^{t}), \vx - \vx' \mright\rangle &= 0~~~ \forall \vx,\vx'\in\Delta(A_i).
    \end{align*}
    Taking the difference between the equalities, we find
    \[
        \mleft\langle -\bar\vU_i^t + \bar\vU_i^{t-1} + \mleft(\lambda_i + \frac{1}{\eta t}\mright) \nabla \regu(\vx_\ili^{t+1}) - \mleft(\lambda_i + \frac{1}{\eta(t-1)}\mright) \nabla \regu(\vx_\ili^t), \vx - \vx'\mright\rangle = 0
    \]
    
    We now use the fact that
    \[
        \bar\vU_i^t - \bar\vU_i^{t-1} = -\frac{1}{t-1}\bar\vU_i^t + \frac{1}{t-1}\vu_i^t.
    \]
    to further write
    \[
        \mleft\langle \frac{1}{t-1}\mleft(-\vu_i^t + \bar\vU_i^t\mright) + \mleft(\lambda_i + \frac{1}{\eta t}\mright) \nabla \regu(\vx_\ili^{t+1}) - \mleft(\lambda_i + \frac{1}{\eta (t-1)}\mright) \nabla \regu(\vx_\ili^t) , \vx - \vx'\mright\rangle = 0
        \numberthis{eq:diff of vis}
    \]
    From \cref{eq:vi for t plus 1} we find
    \[
        \langle \bar\vU_i^t, \vx - \vx'\rangle = \mleft\langle \lambda_i \nabla\regu(\vx_\ili^{t+1}) -     \lambda_i\nabla\regu(\vec\tau_i) + \frac{1}{\eta t}\nabla\regu(\vx_\ili^{t+1}), \vx - \vx'
        \mright\rangle
    \]
    and so, plugging back the previous relationship in \cref{eq:diff of vis} we can write, for all $\vx,\vx'\in\Delta(A_i)$, 
    \begin{align*}
        0
        &= \mleft\langle \frac{1}{t-1}\mleft(-\vu_i^t + \lambda_i \nabla\regu(\vx_\ili^{t+1}) -     \lambda_i\nabla\regu(\vec\tau_i) + \frac{1}{\eta t}\nabla\regu(\vx_\ili^{t+1})\mright) + \mleft(\lambda_i + \frac{1}{\eta t}\mright) \nabla \regu(\vx_\ili^{t+1})\mright. \\[1mm]
        &\hspace{8cm} \mleft. - \mleft(\lambda_i + \frac{1}{\eta (t-1)}\mright) \nabla \regu(\vx_\ili^t), \vx - \vx'\mright\rangle\\[3mm]
        &= \mleft\langle \frac{1}{t-1}\mleft(-\vu_i^t + \lambda_i \nabla\regu(\vx_\ili^{t+1}) -     \lambda_i\nabla\regu(\vec\tau_i) \mright) + \mleft(\lambda_i + \frac{1}{\eta (t-1)}\mright) \nabla \regu(\vx_\ili^{t+1})\mright. \\[1mm]
        &\hspace{8cm} \mleft. - \mleft(\lambda_i + \frac{1}{\eta (t-1)}\mright) \nabla \regu(\vx_\ili^t), \vx - \vx'\mright\rangle\\
        &= \mleft\langle \frac{1}{t-1}\mleft(-\vu_i^t + \lambda_i \nabla\regu(\vx_\ili^{t}) -     \lambda_i\nabla\regu(\vec\tau_i) \mright) + \frac{\eta \lambda_i t + 1}{\eta (t-1)} \nabla \regu(\vx_\ili^{t+1})\mright. \\[1mm]
        &\hspace{9cm} \mleft. - \frac{\eta \lambda_i t + 1}{\eta (t-1)} \nabla \regu(\vx_\ili^t), \vx - \vx'\mright\rangle.
    \end{align*}
    Dividing by $(\eta\lambda_i t + 1)/(\eta (t-1))$ yields the statement.
\end{proof}

\begin{corollary}\label{cor:prox step}
    Fix any player $i$, $\lambda_i \in \Lambda_i$, and $t \ge 1$. For all $\vx \in \Delta(A_i)$, the iterates $\vx_\ili^t$ and $\vx_\ili^{t+1}$ defined in Line~\ref{line:distribution} of Algorithm~\ref{algo:noregret} satisfy
    \[
        &\mleft\langle
            -\vu_i^t + \lambda_i \nabla\regu(\vx_\ili^{t}) - \lambda_i\nabla\regu(\vec\tau_i),
            \vx - \vx_\ili^{t+1}
        \mright\rangle \\[2mm] &\hspace{2.5cm} = \mleft(\lambda_i t + \frac{1}{\eta}\mright)\Big(\KL{\vx}{\vx_\ili^{t+1}} - \KL{\vx}{\vx_\ili^t} + \KL{\vx_\ili^{t+1}}{\vx_\ili^t}\Big).
    \]
\end{corollary}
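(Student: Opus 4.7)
The plan is to derive the identity directly from Lemma~\ref{lem:ftrl to omd} by (i) specializing the free variable $\vx'$, (ii) clearing the scalar factor, and then (iii) converting the inner product of gradient differences into the stated combination of KL divergences via the classical three-point identity for Bregman divergences.

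First, I would substitute $\vx' = \vx_\ili^{t+1}$ into the variational equality of Lemma~\ref{lem:ftrl to omd}. Since the expression vanishes for every $\vx' \in \Delta(A_i)$, this yields
\[
    \mleft\langle \frac{\eta}{\eta\lambda_i t + 1}\mleft(-\vu_i^t + \lambda_i \nabla\regu(\vx_\ili^{t}) - \lambda_i\nabla\regu(\vec\tau_i)\mright), \vx - \vx_\ili^{t+1}\mright\rangle = \mleft\langle \nabla \regu(\vx_\ili^t) - \nabla \regu(\vx_\ili^{t+1}), \vx - \vx_\ili^{t+1}\mright\rangle.
\]
Multiplying both sides by $(\eta\lambda_i t + 1)/\eta = \lambda_i t + 1/\eta$ isolates, on the left, exactly the inner product appearing in the statement.

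The remaining step is to recognize the right-hand side as the three-point identity for the Bregman divergence generated by $\regu$. Because $\regu$ is the negative entropy and every iterate $\vx_\ili^t, \vx_\ili^{t+1}$ lies in the relative interior of $\Delta(A_i)$ (as noted in the proof of Lemma~\ref{lem:ftrl to omd}), the induced Bregman divergence restricted to the simplex coincides with the KL divergence. The standard identity
\[
    \langle \nabla\regu(\vy) - \nabla\regu(\vz), \vx - \vy\rangle = D_\regu(\vx, \vz) - D_\regu(\vx, \vy) - D_\regu(\vy, \vz)
\]
applied with $\vy = \vx_\ili^{t+1}$ and $\vz = \vx_\ili^t$ then gives
\[
    \mleft\langle \nabla \regu(\vx_\ili^t) - \nabla \regu(\vx_\ili^{t+1}), \vx - \vx_\ili^{t+1}\mright\rangle = \KL{\vx}{\vx_\ili^{t+1}} - \KL{\vx}{\vx_\ili^t} + \KL{\vx_\ili^{t+1}}{\vx_\ili^t},
\]
which combined with the previous display yields the claim.

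There is no real obstacle here: the corollary is essentially a bookkeeping rearrangement of Lemma~\ref{lem:ftrl to omd} in the dual-averaging-to-mirror-descent style. The only small point to be careful about is the factor matching $(\eta\lambda_i t + 1)/\eta = \lambda_i t + 1/\eta$ after clearing the denominator, and verifying that Bregman divergences of negative entropy on the open simplex really collapse to KL divergences (so that none of the gradient terms blow up, which was already ensured in Lemma~\ref{lem:ftrl to omd} by the Legendre property of $\regu$).
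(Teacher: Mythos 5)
Your proposal is correct and follows essentially the same route as the paper: substitute $\vx' = \vx_\ili^{t+1}$ into Lemma~\ref{lem:ftrl to omd}, apply the three-point identity for the entropy Bregman divergence (your sign conventions work out consistently), and multiply through by $\lambda_i t + 1/\eta$. Nothing is missing.
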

\begin{proof}
    Since \cref{lem:ftrl to omd} holds for all $\vx,\vx'\in\Delta(A_i)$, we can in particular set $\vx' = \vx_\ili^{t+1}$, and obtain
    \[
        &\frac{ \eta }{\eta\lambda_i t + 1} \mleft\langle -\vu_i^t + \lambda_i \nabla\regu(\vx_\ili^{t}) - \lambda_i\nabla\regu(\vec\tau_i), \vx - \vx_\ili^{t+1}\mright\rangle \\
        &\hspace{6.2cm}+ \mleft\langle \nabla \regu(\vx_\ili^{t+1}) - \nabla \regu(\vx_\ili^t), \vx - \vx_\ili^{t+1}\mright\rangle = 0.\numberthis{eq:pre threepoint}
    \]
    Using the three-point identity
    \[
        \mleft\langle \nabla \regu(\vx_\ili^{t+1}) - \nabla \regu(\vx_\ili^t), \vx - \vx_\ili^{t+1}\mright\rangle = \KL{\vx}{\vx_\ili^t} - \KL{\vx}{\vx_\ili^{t+1}} - \KL{\vx_\ili^{t+1}}{\vx_\ili^t}
    \]
    in \cref{eq:pre threepoint} yields
    \[
        \KL{\vx}{\vx_\ili^{t+1}} &= \KL{\vx}{\vx_\ili^t} - \KL{\vx_\ili^{t+1}}{\vx_\ili^t} \\[2mm] &\hspace{2cm} + \frac{ \eta }{\eta\lambda_i t + 1}
        \mleft\langle
            -\vu_i^t + \lambda_i \nabla\regu(\vx_\ili^{t}) - \lambda_i\nabla\regu(\vec\tau_i),
            \vx - \vx_\ili^{t+1}
        \mright\rangle.
    \]
    Multiplying by $\lambda_i t + 1/\eta$ yields the statement.
\end{proof}

\subsection{Regret Analysis}

Let $\regut_\ili$ be the regularized utility of agent type $\lambda_i\in\Lambda_i$
\[
    \regut_\ili : \Delta(A_i) \ni \vx \mapsto \langle \vu^t_i, \vx \rangle - \lambda_i\,\KL{\vx}{\vec\tau_i}.
\]

\begin{observation} We note the following:
    \begin{itemize}
        \item For any $i\in\{1,2\}$ and $\lambda_i \in \Lambda_i$, the function $\regut_\ili$ satisfies
            \[
                \regut_\ili(\vx) = \regut_\ili(\vx') + \langle \nabla \regut_\ili(\vx'), \vx - \vx'\rangle - \lambda_i\,\KL{\vx}{\vx'} \qquad\forall\,\vx,\vx'\in\Delta(A_i).
            \]
        \item Furthermore,
            \[
                -\nabla \regut_\ili(\vx^t_\ili) = -\vu^t_i + \lambda_i\nabla\regu(\vx^t) - \lambda_i \nabla\regu(\vec\tau_i).
            \]
    \end{itemize}
\end{observation}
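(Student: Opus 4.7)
The Observation consists of two elementary identities about the regularized utility $\regut_\ili(\vx) = \langle \vu_i^t, \vx\rangle - \lambda_i\,\KL{\vx}{\vec\tau_i}$; both follow by direct computation once the role of the negative-entropy regularizer $\regu(\vx) = \sum_{a} \vx(a)\log \vx(a)$ is isolated. The plan is to first rewrite $\regut_\ili$ as an affine function of $\vx$ minus $\lambda_i\,\regu(\vx)$, and then to read off each identity from standard properties of $\regu$ as a Legendre function on the simplex.

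For the first identity, expanding the KL divergence gives
\[
\regut_\ili(\vx) \;=\; \langle \vu_i^t + \lambda_i \log \vec\tau_i,\, \vx\rangle \;-\; \lambda_i\,\regu(\vx).
\]
The first summand is affine in $\vx$, so its Bregman residual (that is, $f(\vx) - f(\vx') - \langle \nabla f(\vx'),\vx-\vx'\rangle$) is identically zero. The contribution of $-\lambda_i\,\regu(\vx)$ to the residual is, by definition, $-\lambda_i$ times the Bregman divergence generated by $\regu$, which for negative entropy is exactly $\KL{\vx}{\vx'}$. Adding the two contributions rearranges to the stated identity $\regut_\ili(\vx) = \regut_\ili(\vx') + \langle \nabla \regut_\ili(\vx'), \vx - \vx'\rangle - \lambda_i\,\KL{\vx}{\vx'}$.

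For the second identity, differentiating the splitting above gives
\[
\nabla \regut_\ili(\vx) \;=\; \vu_i^t + \lambda_i \log \vec\tau_i - \lambda_i \nabla \regu(\vx).
\]
Since $\regu$ is the negative entropy, $\nabla \regu(\vec\tau_i) = \log \vec\tau_i + \mathbf{1}$, and substituting $\log\vec\tau_i = \nabla\regu(\vec\tau_i) - \mathbf{1}$ produces $\nabla \regut_\ili(\vx) = \vu_i^t - \lambda_i \nabla \regu(\vx) + \lambda_i \nabla \regu(\vec\tau_i) - \lambda_i \mathbf{1}$. Negating yields the formula in the statement up to the trailing $\lambda_i \mathbf{1}$. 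This is the only subtle point of the proof: since $\mathbf{1}$ is orthogonal to the tangent space of the simplex, the constant shift does not affect any inner product with a difference $\vx - \vx'$ of simplex points (which is the only way $\nabla\regut_\ili$ is ever used in the subsequent analysis, e.g., in \cref{lem:ftrl to omd} and \cref{cor:prox step}), and is absorbed into the conventional representative of the simplex gradient used throughout the paper. Hence the identity holds as stated under the paper's conventions. The main (and only) obstacle is thus bookkeeping around this $\mathbf{1}$-shift; there is no deeper analytic content.
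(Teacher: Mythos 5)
Your verification is correct. The paper states this Observation without proof, and the direct computation you give---splitting $\tilde u^t_{i,\lambda_i}$ into an affine part plus $-\lambda_i$ times the negative entropy, so that the Bregman residual of the affine part vanishes and the residual of the entropy part is exactly $-\lambda_i\,D_{\mathrm{KL}}(\vec\pi\,\|\,\vec\pi')$---is the intended (and essentially only) justification of the first identity. Your flag on the second identity is also accurate: differentiating the KL term produces an extra $\lambda_i\mathbf{1}$ relative to the stated formula, so the identity holds only modulo the all-ones direction; since that direction is annihilated whenever the gradient is paired with a difference of two points of the simplex, which is how $\nabla\tilde u^t_{i,\lambda_i}$ is used in every subsequent step of the analysis, the discrepancy is harmless bookkeeping rather than a gap.
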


Using \cref{cor:prox step} we have the following
\begin{lemma}\label{lem:regret potential}
For any player $i$ and type $\lambda_i \in \Lambda_i$,
\[
\regut_\ili(\vx) - \regut_\ili(\vx^t_\ili) &\le \frac{\|\vu_i^t\|_\infty^2}{4\lambda_i t + 4/\eta} + \lambda_i\Big(\KL{\vx^{t}_\ili}{\vec\tau_i}  - \KL{\vx^{t+1}_\ili}{\vec\tau_i}\Big) 
        \\&\hspace{.5cm}- \mleft(\lambda_i t + \frac{1}{\eta}\mright) \KL{\vx}{\vx^{t+1}_\ili} + \mleft(\lambda_i (t-1) + \frac{1}{\eta}\mright) \KL{\vx}{\vx^t_\ili}.
\]
\end{lemma}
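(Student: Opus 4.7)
The plan is to combine the exact first-order expansion of $\regut_\ili$ from the Observation above with the proximal identity of Corollary~\ref{cor:prox step}, and then dispose of the remaining linear one-step residual via Hölder's, Pinsker's, and Young's inequalities.

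First I would apply the Observation's identity at $\vy = \vx^t_\ili$ to obtain
\[
\regut_\ili(\vx) - \regut_\ili(\vx^t_\ili) = \langle \nabla \regut_\ili(\vx^t_\ili), \vx - \vx^t_\ili\rangle - \lambda_i \KL{\vx}{\vx^t_\ili}.
\]
Splitting $\vx - \vx^t_\ili = (\vx - \vx^{t+1}_\ili) + (\vx^{t+1}_\ili - \vx^t_\ili)$ produces two inner products. For the first, I would invoke Corollary~\ref{cor:prox step} directly, which rewrites $\langle \nabla \regut_\ili(\vx^t_\ili), \vx - \vx^{t+1}_\ili\rangle$ as $-(\lambda_i t + 1/\eta)\bigl[\KL{\vx}{\vx^{t+1}_\ili} - \KL{\vx}{\vx^t_\ili} + \KL{\vx^{t+1}_\ili}{\vx^t_\ili}\bigr]$. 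For the second, I would use the explicit form $\nabla \regut_\ili(\vx^t_\ili) = \vu^t_i - \lambda_i\bigl(\nabla \regu(\vx^t_\ili) - \nabla \regu(\vec\tau_i)\bigr)$ and then apply the three-point identity for the Bregman divergence of the entropy regularizer $\regu$ (which coincides with $\KL{\cdot}{\cdot}$), converting the gradient-difference inner product $\lambda_i \langle \nabla \regu(\vx^t_\ili) - \nabla \regu(\vec\tau_i), \vx^{t+1}_\ili - \vx^t_\ili\rangle$ into $\lambda_i\bigl(\KL{\vx^{t+1}_\ili}{\vec\tau_i} - \KL{\vx^{t+1}_\ili}{\vx^t_\ili} - \KL{\vx^t_\ili}{\vec\tau_i}\bigr)$ and leaving behind the linear residual $\langle \vu^t_i, \vx^{t+1}_\ili - \vx^t_\ili\rangle$.

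Collecting coefficients then yields the $\lambda_i\bigl(\KL{\vx^t_\ili}{\vec\tau_i} - \KL{\vx^{t+1}_\ili}{\vec\tau_i}\bigr)$ and $-(\lambda_i t + 1/\eta)\KL{\vx}{\vx^{t+1}_\ili}$ contributions of the statement directly, while the combination of $(\lambda_i t + 1/\eta)\KL{\vx}{\vx^t_\ili}$ from Corollary~\ref{cor:prox step} with the $-\lambda_i \KL{\vx}{\vx^t_\ili}$ remainder, via the arithmetic identity $\lambda_i t + 1/\eta - \lambda_i = \lambda_i(t-1) + 1/\eta$, produces the stated $(\lambda_i(t-1) + 1/\eta)\KL{\vx}{\vx^t_\ili}$ coefficient. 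What remains is the residual $\langle \vu^t_i, \vx^{t+1}_\ili - \vx^t_\ili\rangle$ offset by a positive multiple of $\KL{\vx^{t+1}_\ili}{\vx^t_\ili}$; I would bound it by applying Hölder $\langle \vu^t_i, \vx^{t+1}_\ili - \vx^t_\ili\rangle \le \|\vu^t_i\|_\infty \|\vx^{t+1}_\ili - \vx^t_\ili\|_1$, Pinsker $\|\vx^{t+1}_\ili - \vx^t_\ili\|_1^2 \le 2\KL{\vx^{t+1}_\ili}{\vx^t_\ili}$, and Young $ab \le a^2/(4c) + cb^2$ with $c$ chosen so that the $2c\,\KL{\vx^{t+1}_\ili}{\vx^t_\ili}$ term is absorbed into the negative Bregman contribution, producing the $\|\vu^t_i\|_\infty^2/(4\lambda_i t + 4/\eta)$ term claimed by the lemma.

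The main obstacle is the coefficient bookkeeping: one must pick the Young constant $c$ precisely so that the ``stability'' penalty on $\KL{\vx^{t+1}_\ili}{\vx^t_\ili}$ is exactly canceled and the $\|\vu^t_i\|_\infty^2$ coefficient lands at the stated denominator. The arithmetic identities above, together with the fact that the Bregman divergence of $\regu$ equals the KL divergence as an \emph{equality} (so that the strong-convexity lower bound is tight and no slack is introduced), are what make all constants match exactly.
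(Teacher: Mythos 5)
Your proposed proof follows essentially the same route as the paper's: start from the proximal identity of Corollary~\ref{cor:prox step} (equivalently Lemma~\ref{lem:ftrl to omd}), use the exact Bregman expansion of $\tilde u^t_\ili$ from the Observation to surface $\tilde u^t_\ili(\vx)-\tilde u^t_\ili(\vx^t_\ili)$ together with the $-\lambda_i\,\KL{\vx}{\vx^t_\ili}$ correction (which is exactly where the $(\lambda_i(t-1)+1/\eta)$ coefficient comes from), turn the anchor-gradient inner product into a difference of $\KL{\cdot}{\vec\tau_i}$ terms, and absorb the remaining $\langle\vu_i^t,\vx^{t+1}_\ili-\vx^t_\ili\rangle$ residual via H\"older, Young, and the $\ell_1$ strong convexity of the entropic Bregman divergence. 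If anything, your bookkeeping is more careful than the paper's: by using the three-point \emph{identity} you correctly retain the extra $+\lambda_i\,\KL{\vx^{t+1}_\ili}{\vx^t_\ili}$ term, which the paper's displayed chain (which invokes one-sided convexity amid a pair of compensating sign slips) silently discards.

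The one claim that does not survive scrutiny is that ``all constants match exactly.'' After the $+\lambda_i\,\KL{\vx^{t+1}_\ili}{\vx^t_\ili}$ term is absorbed, the negative coefficient you have available on $\KL{\vx^{t+1}_\ili}{\vx^t_\ili}$ is only $\lambda_i(t-1)+1/\eta$, and Pinsker gives $\|\vx^{t+1}_\ili-\vx^t_\ili\|_1^2\le 2\,\KL{\vx^{t+1}_\ili}{\vx^t_\ili}$ (not $\le\KL{\vx^{t+1}_\ili}{\vx^t_\ili}$, as the paper asserts when it writes the strong-convexity step), so the best Young constant yields a leading term of $\|\vu_i^t\|_\infty^2/\bigl(2\lambda_i(t-1)+2/\eta\bigr)$ rather than the stated $\|\vu_i^t\|_\infty^2/(4\lambda_i t+4/\eta)$. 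This is a genuine constant-factor gap between what the argument delivers and what the lemma states, but it is already latent in the paper's own derivation; it only rescales the constants in the downstream regret and last-iterate bounds and does not affect their order, so your proposal is sound in substance.
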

\begin{proof}
From \cref{lem:ftrl to omd},
\[
        0 &= \mleft(\lambda_i t + \frac{1}{\eta}\mright) \Big(-\KL{\vx}{\vx^{t+1}_\ili} + \KL{\vx}{\vx^t_\ili} - \KL{\vx^{t+1}_\ili}{\vx^t_\ili}\Big) + \langle -\nabla \regut_\ili(\vx^t_\ili), \vx - \vx^{t+1}_\ili\rangle\\
        &= \mleft(\lambda_i t + \frac{1}{\eta}\mright) \Big(-\KL{\vx}{\vx^{t+1}_\ili} + \KL{\vx}{\vx^t_\ili} - \KL{\vx^{t+1}_\ili}{\vx^t_\ili}\Big)  
            \\&\hspace{5cm}+ \langle \nabla \regut_\ili(\vx^t_\ili), \vx^{t}_\ili + \vx^{t+1}_\ili\rangle + \langle -\nabla \regut_\ili(\vx^t_\ili), \vx - \vx^{t}_\ili\rangle \\
        &= \mleft(\lambda_i t + \frac{1}{\eta}\mright) \Big(-\KL{\vx}{\vx^{t+1}_\ili} + \KL{\vx}{\vx^t_\ili} - \KL{\vx^{t+1}_\ili}{\vx^t_\ili}\Big)  
            \\&\hspace{3cm}+ \langle -\nabla \regut_\ili(\vx^t_\ili), \vx^{t}_\ili - \vx^{t+1}_\ili\rangle - \regut_\ili(\vx) + \regut_\ili(\vx^t_\ili) - \lambda_i\,\KL{\vx}{\vx^t_\ili}.
    \]
    Rearranging, we find
    \[
         \regut_\ili(\vx) - \regut_\ili(\vx^t_\ili) &= 
         - \mleft(\lambda_i t + \frac{1}{\eta}\mright) \KL{\vx}{\vx^{t+1}_\ili} + \mleft(\lambda_i (t-1) + \frac{1}{\eta}\mright) \KL{\vx}{\vx^t_\ili}\\
         &\hspace{1cm} - \mleft(\lambda_i t + \frac{1}{\eta}\mright) \KL{\vx^{t+1}_\ili}{\vx^t_\ili}+ \underbrace{\langle -\nabla \regut_\ili(\vx^t_\ili), \vx^{t}_\ili - \vx^{t+1}_\ili\rangle}_{\UBnumberthis{eq:increment term}}.
         \numberthis{eq:pre flip}
    \]
    We now upper bound the term in \eqref{eq:increment term} using convexity of the function $\vx \mapsto \KL{\vx}{\vec\tau_i}$, as follows:
    \[
        \langle -\nabla \regut_\ili(\vx^t_\ili), \vx^{t}_\ili - \vx^{t+1}_\ili\rangle &= \langle -\vu_i^t, \vx^{t}_\ili - \vx^{t+1}_\ili\rangle + \lambda_i\langle \nabla\regu(\vx^t_\ili) - \nabla\regu(\vec\tau_i), \vx^{t+1}_\ili - \vx^{t}_\ili\rangle\\
            &\le \langle -\vu_i^t, \vx^{t}_\ili - \vx^{t+1}_\ili\rangle + \lambda_i\Big(\KL{\vx^{t+1}_\ili}{\vec\tau_i}  - \KL{\vx^t_\ili}{\vec\tau_i}\Big).
    \]
    Substituting the above bound into~\eqref{eq:pre flip} yields
    \[
        \regut_\ili(\vx) - \regut_\ili(\vx^t_\ili)
        &\le \langle -\vu_i^t, \vx^t_\ili - \vx^{t+1}_\ili\rangle - \mleft(\lambda_i t + \frac{1}{\eta}\mright) \KL{\vx^{t+1}_\ili}{\vx^t_\ili}
        \\&\hspace{1cm}+ \lambda_i\Big(\KL{\vx^{t}_\ili}{\vec\tau_i}  - \KL{\vx^{t+1}_\ili}{\vec\tau_i}\Big) 
        \\&\hspace{1cm}- \mleft(\lambda_i t + \frac{1}{\eta}\mright) \KL{\vx}{\vx^{t+1}_\ili} + \mleft(\lambda_i (t-1) + \frac{1}{\eta}\mright) \KL{\vx}{\vx^t_\ili}       \\
        &\le \frac{\|\vu_i^t\|_\infty^2}{4\lambda_i t + 4/\eta} + \mleft(\lambda_i t + \frac{1}{\eta}\mright)\|\vx^t_\ili - \vx^{t+1}_\ili\|_1^2 - \mleft(\lambda_i t + \frac{1}{\eta}\mright) \KL{\vx^{t+1}_\ili}{\vx^t_\ili}
        \\&\hspace{1cm}+ \lambda_i\Big(\KL{\vx^{t}_\ili}{\vec\tau_i}  - \KL{\vx^{t+1}_\ili}{\vec\tau_i}\Big) 
        \\&\hspace{1cm}- \mleft(\lambda_i t + \frac{1}{\eta}\mright) \KL{\vx}{\vx^{t+1}_\ili} + \mleft(\lambda_i (t-1) + \frac{1}{\eta}\mright) \KL{\vx}{\vx^t_\ili},
    \]
    where the second inequality follows from Young's inequality.
    Finally, by using the strong convexity of the KL divergence between points $\vx^t_\ili$ and $\vx^{t+1}_\ili$, that is,
    \[
        \KL{\vx^{t+1}_\ili}{\vx^t_\ili} \ge \|\vx^{t+1}_\ili - \vx^t_\ili\|^2_1,
    \]
    yields the statement.
\end{proof}

Noting that the right-hand side of~\cref{lem:regret potential} is telescopic, we immediately have the following.
\begin{theorem}
    For any player $i$ and type $\lambda_i \in \Lambda_i$, and policy $\vx\in\Delta(A_i)$, the following regret bound holds at all times $T$:
    \[
        \sum_{t=1}^T \regut_\ili(\vx) - \regut_\ili(\vx^t_\ili) \le \frac{\bdU^2}{4}\min\mleft\{\frac{2\log T}{\lambda_i}, T\eta\mright\} + \frac{\log n_i}{\eta} + \lambda_i (\log n_i + \bdT).
    \]
\end{theorem}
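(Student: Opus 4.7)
The plan is to sum the per-iteration inequality provided by \cref{lem:regret potential} over $t = 1,\dots,T$ and exploit the telescoping structure of the right-hand side. Specifically, the right-hand side of \cref{lem:regret potential} decomposes into three groups: (i) a ``gradient'' term $\frac{\|\vu_i^t\|_\infty^2}{4\lambda_i t + 4/\eta}$, (ii) a one-step difference $\lambda_i(\KL{\vx^t_\ili}{\vec\tau_i} - \KL{\vx^{t+1}_\ili}{\vec\tau_i})$ coming from the anchoring term, and (iii) a two-step difference $(\lambda_i(t-1) + 1/\eta)\KL{\vx}{\vx^t_\ili} - (\lambda_i t + 1/\eta)\KL{\vx}{\vx^{t+1}_\ili}$ arising from the proximal step. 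I would handle each group separately.

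First, I would telescope group (ii) and use $\KL{\vx^{T+1}_\ili}{\vec\tau_i}\ge 0$ to obtain $\lambda_i\KL{\vx^1_\ili}{\vec\tau_i}$, which evaluates to exactly $\lambda_i(\log n_i+\bdT)$ after plugging in the uniform initialization $\vx^1_\ili(a)=1/n_i$ and using the definition of $\bdT$. Next, for group (iii), the coefficients are designed to telescope so that all intermediate $\KL{\vx}{\vx^t_\ili}$ terms cancel, leaving $\frac{1}{\eta}\KL{\vx}{\vx^1_\ili} - (\lambda_i T + 1/\eta)\KL{\vx}{\vx^{T+1}_\ili}$. Dropping the nonpositive last term and upper bounding $\KL{\vx}{\vx^1_\ili}\le \log n_i$ (again using the uniform initialization) yields the $\log n_i/\eta$ term in the claim.

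The only step requiring slightly more than bookkeeping is group (i). Using $\|\vu_i^t\|_\infty \le \bdU$, I would bound
\[
    \sum_{t=1}^T \frac{\|\vu_i^t\|_\infty^2}{4\lambda_i t + 4/\eta} \;\le\; \frac{\bdU^2}{4} \sum_{t=1}^T \frac{1}{\lambda_i t + 1/\eta}
\]
by the minimum of two estimates. Dropping the $1/\eta$ in the denominator and using $\sum_{t=1}^T 1/t \le 1+\log T \le 2\log T$ (for $T\ge 2$) produces the first branch $\frac{2\log T}{\lambda_i}$, which is useful when $\lambda_i$ is not too small. Dropping the $\lambda_i t$ term in the denominator instead gives $\sum_{t=1}^T \eta = T\eta$, which is the useful branch when $\lambda_i$ is tiny. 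Taking the minimum yields the first summand of the theorem.

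Combining the three contributions recovers the stated bound. The only mildly delicate point is the two-sided estimate of the harmonic-like sum needed to produce the minimum; everything else is a direct consequence of telescoping and the explicit form of the uniform initial iterate, so I do not anticipate a real obstacle beyond careful bookkeeping of the coefficients.
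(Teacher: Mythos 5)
Your proposal is correct and follows essentially the same route as the paper's proof: sum the per-iteration bound of \cref{lem:regret potential}, telescope the anchor-divergence and proximal terms down to $\lambda_i\,\KL{\vx^1_\ili}{\vec\tau_i}$ and $\KL{\vx}{\vx^1_\ili}/\eta$, evaluate these via the uniform initialization, and bound the remaining sum using $\lambda_i t + 1/\eta \ge \max\{\lambda_i t, 1/\eta\}$ to obtain the minimum of the harmonic bound $2\log T/\lambda_i$ and the linear bound $T\eta$. The only difference is that you spell out the telescoping bookkeeping more explicitly than the paper does.
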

\begin{proof}
    From \cref{lem:regret potential} we have that
    \[
        \sum_{t=1}^T \regut_\ili(\vx) - \regut_\ili(\vx^t_\ili) &\le \mleft(\frac{\bdU^2}{4}\sum_{t=1}^T \frac{1}{\lambda_i t + 1/\eta}\mright) + \lambda_i \KL{\vx^1_\ili}{\vec\tau_i} + \frac{\KL{\vx}{\vx^1_\ili}}{\eta}\\
        &\le \frac{\bdU^2}{4}\mleft(\sum_{t=1}^T\min\mleft\{\frac{1}{\lambda_i t}, \eta\mright\}\mright) + \lambda_i (\log n_i + Q_i) + \frac{\log n_i}{\eta}\\
        &\le \frac{\bdU^2}{4}\min\mleft\{\frac{2\log T}{\lambda_i}, \eta T\mright\} + \lambda_i (\log n_i + Q_i) + \frac{\log n_i}{\eta},
    \]
    where the second inequality follows from the fact that $\lambda_i t + 1/\eta \ge \max\{\lambda_i t, 1/\eta\}$ and the fact that $\vx^1_\ili$ is the uniform strategy.
\end{proof}

\subsection{Last-Iterate Convergence in Two-Player Zero-Sum Games}

In two-player game with payoff matrix $\vec{A}$ for Player~$1$, a Bayes-Nash equilibrium to the regularized game is a collection of policies $(\vx^*_\ili)$ such that for any supported type $\lambda_i$ of Player~$i\in\{1,2\}$, the policy $\vx^*_\ili$ is a best response to the average policy of the opponent. In symbols, 
\[
    \vx^*_{1,\lambda_1} &\in \argmax_{\vx \in \Delta(A_1)} \Big\{\langle \vec{A} \, \avg2 { \vx^*_{2,\lambda_2} }, \vx \rangle + \lambda_1 \KL{\vx}{\vec\tau_1} \Big\} & \forall\, \lambda_1 \in\Lambda_1, \\
    \vx^*_{2,\lambda_2} &\in \argmax_{\vx \in \Delta(A_2)} \Big\{\langle -\vec{A}^\top \, \avg1 { \vx^*_{1,\lambda_1} }, \vx \rangle + \lambda_2 \KL{\vx}{\vec\tau_2}\Big\} & \forall\, \lambda_2 \in\Lambda_2.
\]

Denoting $\bar\vx^*_1 \defeq \avg1 { \vx^*_{1,\lambda_1} }, \bar\vx^*_2 \defeq \avg2 {\vx^*_{2,\lambda_2}}$, the first-order optimality conditions for the best response problems above are
\[
    \langle
        \vec{A}\, \bar\vx^*_{2} + \lambda_1 \nabla\regu(\vx^*_{1,\lambda_1}) - \lambda_1 \nabla \regu(\vec\tau_1), \vx^*_{1,\lambda_1} - \vx'_{1, \lambda_1}
    \rangle \ge 0 & \qquad \forall\, \vx'_{1,\lambda_1} \in \Delta(A_1),\\
    \langle
        -\vec{A}^\top\, \bar\vx^*_{1} + \lambda_2 \nabla\regu(\vx^*_{2,\lambda_2}) - \lambda_2 \nabla \regu(\vec\tau_2), \vx^*_{2,\lambda_2} - \vx'_{2, \lambda_2}
    \rangle \ge 0 & \qquad \forall\, \vx'_{2,\lambda_2} \in \Delta(A_2).
\]

We also mention the following standard lemma.
\begin{lemma}\label{lem:sc}
    Let $(\vx^*_\ili)_{i\in\{1,2\},\lambda_1\in\Lambda_i}$ be the unique Bayes-Nash equilibrium of the regularized game. Let policies $\vx'_\ili$ be arbitrary, and let:
    \begin{itemize}
        \item $\bar\vx'_1 \defeq \avg1{\vx'_{1,\lambda_1}},\quad \bar\vx'_2 \defeq \avg2{ \vx'_{2,\lambda_2} }$;
        \item $\alpha \defeq \displaystyle \avg1{ \langle -\vec{A}\bar\vx'_{2} + \lambda_1\nabla\regu(\vx'_{1,\lambda_1}) - \lambda_1\nabla\regu(\vec\tau_1) , \vx^*_{1,\lambda_1} - \vx'_{1,\lambda_1} \rangle }$;
        \item $\beta \defeq \displaystyle\avg2 { \langle \vec{A}^\top\bar\vx'_{1} + \lambda_2\nabla\regu(\vx'_{2,\lambda_2}) - \lambda_2\nabla\regu(\vec\tau_2) , \vx^*_{2,\lambda_2} - \vx'_{2,\lambda_2} \rangle }$.
    \end{itemize}
    Then,
    \[
        \alpha + \beta \le -\sum_{i\in\{1,2\}}\avg{i}{ \lambda_i\,\KL{\vx'_\ili}{\vx^*_{i, \lambda_i}} + \lambda_i\,\KL{\vx^*_\ili}{\vx'_\ili} }.
    \]
\end{lemma}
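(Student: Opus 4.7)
The plan is to prove \cref{lem:sc} by a direct algebraic decomposition of $\alpha$ and $\beta$, leveraging (i) the first-order optimality conditions of the Bayes-Nash equilibrium, (ii) the zero-sum structure of the payoff $\vec A$, and (iii) the symmetric Bregman-divergence identity for the KL. Specifically, for each of $\alpha$ and $\beta$, I would add and subtract the equilibrium analogue of the linear form inside the inner product, splitting the average-over-types expression into three pieces:
\[
\alpha = \underbrace{\avg1{\langle -\vec A\bar\vx^*_2 + \lambda_1 \nabla\regu(\vx^*_{1,\lambda_1}) - \lambda_1\nabla\regu(\vec\tau_1), \vx^*_{1,\lambda_1} - \vx'_{1,\lambda_1}\rangle}}_{T_1}
+ \underbrace{\langle \vec A(\bar\vx^*_2 - \bar\vx'_2),\, \bar\vx^*_1 - \bar\vx'_1\rangle}_{T_2}
+ \underbrace{\lambda_1 \avg1{\langle \nabla\regu(\vx'_{1,\lambda_1}) - \nabla\regu(\vx^*_{1,\lambda_1}),\, \vx^*_{1,\lambda_1} - \vx'_{1,\lambda_1}\rangle}}_{T_3},
\]
and analogously write $\beta = T_1' + T_2' + T_3'$, where $T_2'$ uses $-\vec A^\top$ and $\bar\vx^*_1-\bar\vx'_1$ in the outer slot.

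For the first piece, I would apply the variational inequality satisfied at the equilibrium: for every type $\lambda_1$ the Bayes-Nash first-order optimality condition stated just before the lemma gives $\langle -\vec A \bar\vx^*_2 + \lambda_1\nabla\regu(\vx^*_{1,\lambda_1}) - \lambda_1\nabla\regu(\vec\tau_1), \vx^*_{1,\lambda_1} - \vx'_{1,\lambda_1}\rangle \le 0$ for any feasible $\vx'_{1,\lambda_1}\in\Delta(A_1)$. Averaging over $\lambda_1\sim\beta_1$ (which preserves the inequality termwise) yields $T_1 \le 0$, and symmetrically $T_1' \le 0$. For the second piece I would pull the average-over-types outside the bilinear form (using linearity together with the definitions $\bar\vx^*_i=\avg i{\vx^*_\ili}$, $\bar\vx'_i=\avg i{\vx'_\ili}$) to obtain $T_2 + T_2' = \langle\vec A(\bar\vx^*_2-\bar\vx'_2),\bar\vx^*_1-\bar\vx'_1\rangle - \langle\vec A(\bar\vx^*_2-\bar\vx'_2),\bar\vx^*_1-\bar\vx'_1\rangle = 0$; this is the single place where the zero-sum hypothesis enters.

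For the third piece, I would invoke the three-point identity for the Bregman divergence generated by the negative-entropy regularizer $\regu$, which for the KL specialises to the well-known symmetric formula
\[
\langle \nabla\regu(\vx) - \nabla\regu(\vy),\, \vx - \vy\rangle \;=\; \KL{\vx}{\vy} + \KL{\vy}{\vx}\qquad \forall\, \vx,\vy\in\mathrm{relint}\,\Delta(A_i).
\]
Applying this with $\vx=\vx'_{1,\lambda_1},\vy=\vx^*_{1,\lambda_1}$ and then flipping the sign (because $\vx^* - \vx'$ appears rather than $\vx' - \vx^*$) gives $T_3 = -\lambda_1 \avg1{\KL{\vx'_{1,\lambda_1}}{\vx^*_{1,\lambda_1}} + \KL{\vx^*_{1,\lambda_1}}{\vx'_{1,\lambda_1}}}$, and analogously for $T_3'$. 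Combining the three contributions yields $\alpha+\beta = T_1+T_1'+(T_2+T_2')+T_3+T_3' \le 0 + 0 + T_3 + T_3'$, which is exactly the bound stated in the lemma.

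The proof is entirely bookkeeping; there is no deep step and nothing that requires strong convexity beyond the symmetric Bregman identity. The main obstacle is sign discipline: the regularizer $\regu$ is negative entropy, so $\nabla\KL{\cdot}{\vec\tau} = \nabla\regu(\cdot) - \nabla\regu(\vec\tau)$, and the max-on-simplex first-order condition flips one further sign. I would therefore verify each signed quantity before combining, and double-check that the cancellation in $T_2+T_2'$ really uses the zero-sum structure of $\vec A$ (rather than any property of $\bar\vx^*$ itself), as that is the only place where the 2p0s assumption of the lemma plays a role.
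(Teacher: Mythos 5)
Your argument is correct, and it is the standard monotonicity argument; note that the paper itself gives no proof of \cref{lem:sc} (it is stated as a ``standard lemma''), so there is no in-paper proof to diverge from. The three-way split of $\alpha$ and $\beta$ is valid algebra, the bilinear terms cancel exactly because Player~2's payoff gradient is $-\vec A^\top\bar\vx_1$ (this is indeed the only place the zero-sum structure enters), and the symmetric Bregman identity $\langle\nabla\regu(\vx)-\nabla\regu(\vy),\vx-\vy\rangle=\KL{\vx}{\vy}+\KL{\vy}{\vx}$ produces the two KL terms with the correct sign. Two points of hygiene. First, the first-order condition you invoke, $\langle -\vec A\bar\vx^*_2+\lambda_1\nabla\regu(\vx^*_{1,\lambda_1})-\lambda_1\nabla\regu(\vec\tau_1),\,\vx^*_{1,\lambda_1}-\vx'_{1,\lambda_1}\rangle\le 0$, is the right one for the penalized best response $\max_{\vx}\{\langle\vec A\bar\vx^*_2,\vx\rangle-\lambda_1\KL{\vx}{\vec\tau_1}\}$ consistent with~\eqref{eq:regularized C}, but it is \emph{not} literally ``the condition stated just before the lemma'': the paper's display carries $+\vec A\,\bar\vx^*_2$ with the inequality $\ge 0$, matching its argmax display which writes $+\lambda_1\KL{\vx}{\vec\tau_1}$ rather than $-\lambda_1\KL{\vx}{\vec\tau_1}$. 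Those displays have a sign typo relative to the utility definition and to the way \cref{lem:sc} is actually used in \cref{prop:pot increase}; you should state explicitly that you are deriving the condition from the penalty convention rather than citing the displayed inequality verbatim. (In fact, since $\regu$ is Legendre the regularized best response lies in the relative interior of the simplex, so $T_1=T_1'=0$ exactly; the inequality $\le 0$ is more than you need.) Second, in $T_3$ you place $\lambda_1$ outside $\Es_{\lambda_1\sim\beta_1}[\,\cdot\,]$; it must remain inside the expectation, as on the right-hand side of the lemma. Neither point affects the validity of the argument.
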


The following potential function will be key in the analysis:
\renewcommand{\pot}{\Psi}
    \[
        \pot^t &\defeq \sum_{i\in\{1,2\}} \avg{i}{ \mleft(\lambda_i (t-1) + \frac{1}{\eta}\mright) \KL{\vx^*_\ili}{\vx^t_\ili} + \lambda_i\, \KL{\vx^t_\ili}{\vec\tau_i}},~\quad t\in\{1,2,\dots\}.
    \]

\begin{proposition}\label{prop:pot increase}
    At all times $t \in \{1, 2, \dots\}$, let
    \[
        \bar\vx_{-i}^t \defeq \avg{-i}{\vx_{-i,\lambda_{-i}}^t} .
    \]
    The potential $\pot^t$ satisfies the inequality
    \[
        \pot^{t+1} \le \pot^t +\sum_{i\in\{1,2\}} \avg{i}{ \frac{\mleft\|\vu_i^t\mright\|^2_\infty}{4\lambda_i t + 4/\eta}  + \mleft\langle \vec{A}_i \bar\vx^{t}_{-i} - \vec{u}^t_i, \vx^*_\ili - \vx^{t}_\ili \mright\rangle}.
    \]
\end{proposition}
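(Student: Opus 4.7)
The strategy is to expand $\pot^{t+1}-\pot^t$ and apply Corollary~\ref{cor:prox step} as the key substitution, instantiated with $\vx = \vx^*_\ili$. That rewrites $(\lambda_i t + 1/\eta)\KL{\vx^*_\ili}{\vx^{t+1}_\ili}$ as $(\lambda_i t + 1/\eta)\KL{\vx^*_\ili}{\vx^t_\ili}$ minus the one-step displacement $\KL{\vx^{t+1}_\ili}{\vx^t_\ili}$ plus the linear functional $\langle -\vu^t_i + \lambda_i\nabla\regu(\vx^t_\ili) - \lambda_i\nabla\regu(\vec\tau_i),\, \vx^*_\ili - \vx^{t+1}_\ili\rangle$. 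Substituting into $\pot^{t+1}-\pot^t$ and then applying the three-point identity $\KL{a}{b} = \KL{a}{c} + \KL{c}{b} + \langle \nabla\regu(c) - \nabla\regu(b),\, a-c\rangle$ rewrites every occurrence of $\langle \nabla\regu(\vx^t_\ili) - \nabla\regu(\vec\tau_i),\, \cdot\rangle$ purely in terms of KL divergences. After collecting terms, the $\KL{\vx^*_\ili}{\vx^t_\ili}$ and $\KL{\vx^{t+1}_\ili}{\vec\tau_i}$ cross-terms cancel, and each summand of $\pot^{t+1}-\pot^t$ reduces to the clean form
\[
A_\ili = \lambda_i\bigl(\KL{\vx^*_\ili}{\vec\tau_i} - \KL{\vx^t_\ili}{\vec\tau_i}\bigr) - \bigl(\lambda_i(t-1) + \tfrac{1}{\eta}\bigr)\KL{\vx^{t+1}_\ili}{\vx^t_\ili} + \langle -\vu^t_i,\, \vx^*_\ili - \vx^{t+1}_\ili\rangle.
\]

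Next, I split $\langle -\vu^t_i, \vx^*_\ili - \vx^{t+1}_\ili\rangle$ into $\langle -\vu^t_i, \vx^*_\ili - \vx^t_\ili\rangle + \langle -\vu^t_i, \vx^t_\ili - \vx^{t+1}_\ili\rangle$, and add and subtract the expected utility vector $\vec{A}_i\bar\vx^t_{-i}$ inside the first piece to expose the target martingale-style term $\langle \vec{A}_i\bar\vx^t_{-i} - \vu^t_i,\, \vx^*_\ili - \vx^t_\ili\rangle$ with residue $\langle -\vec{A}_i\bar\vx^t_{-i},\, \vx^*_\ili - \vx^t_\ili\rangle$. The residue plus $\lambda_i(\KL{\vx^*_\ili}{\vec\tau_i} - \KL{\vx^t_\ili}{\vec\tau_i})$ is exactly $-\bigl[\tilde u_\ili(\vx^*_\ili, \bar\vx^t_{-i}) - \tilde u_\ili(\vx^t_\ili, \bar\vx^t_{-i})\bigr]$, i.e., the negative regularized-utility gap of the Bayes-Nash strategy against the current average opponent. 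Summing over both players in the 2p0s game and averaging over $\lambda_i\sim\beta_i$, Lemma~\ref{lem:sc} applied at $\vx'_\ili = \vx^t_\ili$ bounds this sum by $-\sum_i \avg{i}{\lambda_i\KL{\vx^t_\ili}{\vx^*_\ili} + \lambda_i\KL{\vx^*_\ili}{\vx^t_\ili}} \le 0$, so the entire block may be dropped from the upper bound.

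What remains is $\langle -\vu^t_i,\, \vx^t_\ili - \vx^{t+1}_\ili\rangle - (\lambda_i(t-1) + 1/\eta)\KL{\vx^{t+1}_\ili}{\vx^t_\ili}$, which I bound by Young's inequality together with the $\ell_1$-strong convexity of the negative entropy $\regu$: specifically $\langle -\vu^t_i,\, \vx^t_\ili - \vx^{t+1}_\ili\rangle \le \|\vu^t_i\|_\infty \|\vx^t_\ili - \vx^{t+1}_\ili\|_1 \le \|\vu^t_i\|^2_\infty/(4(\lambda_i t + 1/\eta)) + (\lambda_i t + 1/\eta)\|\vx^t_\ili - \vx^{t+1}_\ili\|_1^2$, and then $\|\vx^t_\ili - \vx^{t+1}_\ili\|_1^2 \le \KL{\vx^{t+1}_\ili}{\vx^t_\ili}$ allows the quadratic term to be absorbed into the penalty, leaving exactly $\|\vu^t_i\|^2_\infty/(4\lambda_i t + 4/\eta)$. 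The main obstacle will be the sign and constant bookkeeping: each of the several three-point identity invocations produces a handful of KL cross-terms that must cancel exactly, and the Young constant must be chosen to yield precisely the denominator $4\lambda_i t + 4/\eta$. The conceptual subtlety is that $\vx^*_\ili$ is a best response against $\bar\vx^*_{-i}$ rather than against $\bar\vx^t_{-i}$, which is precisely why Lemma~\ref{lem:sc} (a 2p0s-specific statement) must be invoked rather than a pointwise optimality of $\vx^*_\ili$.
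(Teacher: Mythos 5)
Your proposal follows the paper's proof essentially step for step: both start from Corollary~\ref{cor:prox step} instantiated at $\vx = \vx^*_\ili$, both expose the martingale term $\langle \vec{A}_i\bar\vx^t_{-i}-\vu^t_i,\vx^*_\ili-\vx^t_\ili\rangle$ by adding and subtracting $\vec{A}_i\bar\vx^t_{-i}$, both dispose of the equilibrium block via Lemma~\ref{lem:sc} at $\vx'_\ili=\vx^t_\ili$, and both finish with Young's inequality plus the $\ell_1$-strong convexity of the entropic Bregman divergence. The only organizational difference is that you expand every $\langle\nabla\regu(\vx^t_\ili)-\nabla\regu(\vec\tau_i),\cdot\rangle$ term exactly via the three-point identity before regrouping, whereas the paper regroups the inner products first; your intermediate form $A_\ili$ is correct.

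One constant does not close as you describe it, and it sits precisely where you flagged the risk. After your (correct) exact bookkeeping, the displacement penalty available is $-(\lambda_i(t-1)+\tfrac1\eta)\,\KL{\vx^{t+1}_\ili}{\vx^t_\ili}$, but Young's inequality with parameter $\lambda_i t+\tfrac1\eta$ produces a quadratic term of size up to $(\lambda_i t+\tfrac1\eta)\,\KL{\vx^{t+1}_\ili}{\vx^t_\ili}$, so absorption leaves an uncancelled $+\lambda_i\,\KL{\vx^{t+1}_\ili}{\vx^t_\ili}\ge 0$. Choosing the Young parameter equal to $\lambda_i(t-1)+\tfrac1\eta$ absorbs cleanly but yields the denominator $4\lambda_i(t-1)+4/\eta$ rather than the stated $4\lambda_i t+4/\eta$. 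This is harmless downstream (the sums in Theorem~\ref{thm:distance to nash} change by at most an additive $O(\eta \bdU^2)$ and no rate is affected), and the paper's own writeup shares the wrinkle: its bound on the inner product against $\vx^t_\ili-\vx^{t+1}_\ili$ silently drops the same positive $\lambda_i\,\KL{\vx^{t+1}_\ili}{\vx^t_\ili}$ that the exact three-point identity produces. A second, purely cosmetic slip: the block you drop equals $\alpha+\beta$ from Lemma~\ref{lem:sc} \emph{plus} $\sum_i\avg{i}{\lambda_i\,\KL{\vx^*_\ili}{\vx^t_\ili}}$, so its bound is $-\sum_i\avg{i}{\lambda_i\,\KL{\vx^t_\ili}{\vx^*_\ili}}$ rather than the two-term expression you wrote; since you only use nonpositivity, nothing breaks.
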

\begin{proof}\allowdisplaybreaks
    By multiplying both sides of \cref{cor:prox step} for the choice $\vx = \vx^*_\ili$, taking expectations over $\lambda_i \sim \beta_i$, and summing over the player $i\in\{1,2\}$, we find
    \[
        &\sum_{i\in\{1,2\}} \avg{i}{ \mleft(\lambda_i t + \frac{1}{\eta}\mright) \KL{\vx^*_\ili}{\vx^{t+1}_\ili} } = \sum_{i\in\{1,2\}} \avg{i}{ \mleft(\lambda_i t + \frac{1}{\eta}\mright) \KL{\vx^*_\ili}{\vx^t_\ili} } \\[2mm]
        &\hspace{3cm} - \sum_{i\in\{1,2\}} \avg{i}{ \mleft(\lambda_i t + \frac{1}{\eta}\mright) \KL{\vx^{t+1}_\ili}{\vx^t_\ili} } \\[2mm]
        &\hspace{3cm} + \underbrace{\sum_{i\in\{1,2\}} \avg{i}{ \mleft\langle -\vu_i^t + \lambda_i \nabla\regu(\vx^{t}_\ili) - \lambda_i\nabla\regu(\vec\tau_i), \vx^*_\ili - \vx^{t+1}_\ili \mright\rangle }}_{\UBnumberthis[$\clubsuit$]{eq:stepA}}\!.
        \numberthis{eq:step0}
    \]
    We now proceed to analyze the last summation on the right-hand side. First,
    \[
        \eqref{eq:stepA} &= \underbrace{\sum_{i\in\{1,2\}} \avg{i}{ \mleft\langle -\vec{A}_i \bar\vx^{t}_{-i} + \lambda_i \nabla\regu(\vx^{t}_\ili) - \lambda_i\nabla\regu(\vec\tau_i), \vx^*_\ili - \vx^{t}_\ili \mright\rangle }}_ {\UBnumberthis[$\spadesuit$]{eq:stepA1}}\\
            &\hspace{2cm} + \underbrace{\sum_{i\in\{1,2\}} \avg{i}{ \mleft\langle -\vu_i^t + \lambda_i \nabla\regu(\vx^{t}_\ili) - \lambda_i\nabla\regu(\vec\tau_i), \vx^{t}_\ili - \vx^{t+1}_\ili \mright\rangle }}_{ \UBnumberthis[$\heartsuit$]{eq:stepA3}}\\
            &\hspace{2cm} + \sum_{i\in\{1,2\}} \avg{i}{ \mleft\langle \vec{A}_i \bar\vx^{t}_{-i} - \vec{u}^t_i, \vx^*_\ili - \vx^{t}_\ili \mright\rangle }.\numberthis{eq:stepA decomposition}
    \]
    Using \cref{lem:sc} we can immediately write
    \[
        \eqref{eq:stepA1} &\le \sum_{i\in\{1,2\}}\avg{i}{ -\lambda_i \KL{\vx^*_\ili}{\vx^t_\ili} }.
    \]
    By manipulating the inner product in~\eqref{eq:stepA3}, we have 
    \[
        \eqref{eq:stepA3} &= \sum_{i\in\{1,2\}} \avg{i}{ \langle -\vu_i^t, \vx^t_\ili - \vx^{t+1}_\ili \rangle -\lambda_i \Big\langle \nabla\regu(\vx^t_\ili) - \regu(\vx^{t+1}_\ili), \vx^{t+1}_\ili - \vx^t_\ili\Big\rangle }\\
        &\le \sum_{i\in\{1,2\}} \avg{i}{ \langle -\vu_i^t, \vx^t_\ili - \vx^{t+1}_\ili \rangle + \lambda_i\Big(\KL{\vx^{t+1}_\ili}{\vec\tau_i} - \KL{\vx^t_\ili}{\vec\tau_i}\Big) }\\
        &\le \sum_{i\in\{1,2\}} \avg{i}{ \frac{\mleft\|\vu_i^t \mright\|^2_\infty}{4\lambda_i t + 4/\eta} + \mleft(\lambda_i t + \frac{1}{\eta}\mright) \mleft\|\vx^t_\ili - \vx^{t+1}_\ili\mright\|^2_1} \\
            &\hspace{3.3cm}+ \sum_{i\in\{1,2\}} \avg{i}{\lambda_i\mleft(\KL{\vx^{t+1}_\ili}{\vec\tau_i} - \KL{\vx^t_\ili}{\vec\tau_i}\mright)},
    \]
    where the last inequality follow from the fact that $ab \le a^2/(4\rho) + \rho b^2$ for all choices of $a,b\ge 0$ and $\rho > 0$.
    Substituting the individual bounds into~\eqref{eq:stepA decomposition} yields
    \[
        \eqref{eq:stepA} & \le \sum_{i\in\{1,2\}}\avg{i}{ \lambda_i \mleft(\KL{\vx^{t+1}_\ili}{\vec\tau_i} - \KL{\vx^t_\ili}{\vec\tau_i}\mright) } \\
        &\hspace{1.5cm} + \sum_{i\in\{1,2\}} \avg{i}{ \frac{\mleft\|\vu_i^t\mright\|^2_\infty}{4\lambda_i t + 4/\eta} + \mleft(\lambda_i t + \frac{1}{\eta}\mright) \mleft\|\vx^t_\ili - \vx^{t+1}_\ili\mright\|^2_1}\\
        &\hspace{1.5cm} + \sum_{i\in\{1,2\}} \avg{i}{ \mleft\langle \vec{A}_i \bar\vx^{t}_{-i} - \vec{u}^t_i, \vx^*_\ili - \vx^{t}_\ili \mright\rangle }.
    \]
    Finally, plugging the above bound into \eqref{eq:step0} and rearranging terms yields
    \[
        \pot^{t+1} &\le \pot^t + \sum_{i\in\{1,2\}} \avg{i}{ - \mleft(\lambda_i t + \frac{1}{\eta}\mright) \KL{\vx^{t+1}_\ili}{\vx^t_\ili}}\\
            &\hspace{1.5cm} + \sum_{i\in\{1,2\}} \avg{i}{ \frac{\mleft\|\vu_i^t\mright\|^2_\infty}{4\lambda_i t + 4/\eta} + \mleft(\lambda_i t + \frac{1}{\eta}\mright) \mleft\|\vx^t_\ili - \vx^{t+1}_\ili\mright\|^2_1}\\
            &\hspace{1.5cm} + \sum_{i\in\{1,2\}} \avg{i}{ \mleft\langle \vec{A}_i \bar\vx^{t}_{-i} - \vec{u}^t_i, \vx^*_\ili - \vx^{t}_\ili \mright\rangle }.\\
        &\le \pot^t + \sum_{i\in\{1,2\}} \avg{i}{ - \mleft(\lambda_i t + \frac{1}{\eta}\mright) \mleft\| \vx^{t+1}_\ili - \vx^t_\ili \mright\|_1^2}\\
            &\hspace{1.5cm} + \sum_{i\in\{1,2\}} \avg{i}{ \frac{\mleft\|\vu_i^t\mright\|^2_\infty}{4\lambda_i t + 4/\eta} + \mleft(\lambda_i t + \frac{1}{\eta}\mright) \mleft\|\vx^t_\ili - \vx^{t+1}_\ili\mright\|^2_1}\\
            &\hspace{1.5cm} + \sum_{i\in\{1,2\}} \avg{i}{ \mleft\langle \vec{A}_i \bar\vx^{t}_{-i} - \vec{u}^t_i, \vx^*_\ili - \vx^{t}_\ili \mright\rangle }.\\
        &\le \pot^t +\sum_{i\in\{1,2\}} \avg{i}{ \frac{\mleft\|\vu_i^t\mright\|^2_\infty}{4\lambda_i t + 4/\eta} + \mleft\langle \vec{A}_i \bar\vx^{t}_{-i} - \vec{u}^t_i, \vx^*_\ili - \vx^{t}_\ili \mright\rangle },
    \]
    as we wanted to show.
\end{proof}

\begin{theorem}\label{thm:distance to nash}
    As in \cref{prop:pot increase}, let
    \[
            \bar\vx_{-i}^t \defeq \avg{-i}{\vx_{-i,\lambda_{-i}}^t} .
    \]
    Let $\dsur{T}$ be the notion of distance defined as
    \[
        \dsur{T} \defeq \sum_{i\in\{1,2\}} \avg{i}{(\lambda_i + \temp_{T-1}) \KL{\vx^*_\ili}{\vx^T_\ili}}.
    \]
    At all times $T = 2,3,\dots$,
    \[
        &\dsur{T} \le \frac{1}{T}\mleft(\rho + \frac{\log n_i}{\eta} + \frac{\bdU^2}{2}\sum_{i\in\{1,2\}}\avg{i}{\min\mleft\{\frac{2 \log T}{\lambda_i}, \eta T \mright\}}\mright) \\&\hspace{5.5cm}+ \frac{2}{T} \sum_{t=1}^{T}\sum_{i\in\{1,2\}} \avg{i}{ \mleft\langle \vec{A}_i \bar\vx^{t}_{-i} - \vec{u}^t_i, \vx^*_\ili - \vx^{t}_\ili \mright\rangle },
    \]
    where
    \[
        \rho \defeq 2\sum_{i\in\{1,2\}} \avg{i}{\lambda_i}\, (\log n_i + \bdT).
    \]
\end{theorem}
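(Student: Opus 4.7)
The proof should be a direct telescoping of Proposition~\ref{prop:pot increase}, combined with a one-sided relationship between the potential $\pot^T$ and the quantity $\dsur{T}$. The plan is as follows.

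First, I would iterate the inequality of \cref{prop:pot increase} over $t=1,\dots,T-1$ (so that the left-hand side becomes $\pot^T$), obtaining
\[
    \pot^T \;\le\; \pot^1 \;+\; \sum_{t=1}^{T-1}\sum_{i\in\{1,2\}} \avg{i}{\frac{\|\vu^t_i\|^2_\infty}{4\lambda_i t + 4/\eta} + \mleft\langle \vec{A}_i\bar\vx^t_{-i} - \vu^t_i,\; \vx^*_\ili - \vx^t_\ili \mright\rangle}.
\]
The second step is to lower-bound $\pot^T$ by $(T-1)\dsur{T}$. Indeed, from the definition of the potential and the identity $(T-1)(\lambda_i + \temp_{T-1}) = \lambda_i(T-1) + 1/\eta$, one has
\[
    \pot^T \;=\; (T-1)\dsur{T} \;+\; \sum_{i\in\{1,2\}} \avg{i}{\lambda_i \KL{\vx^t_\ili}{\vec\tau_i}} \;\ge\; (T-1)\dsur{T},
\]
since the omitted KL terms are nonnegative.

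Next, I would bound the three pieces on the right-hand side. The initial potential $\pot^1$ is controlled by direct computation on the uniform policy $\vx^1_\ili$: one gets $\KL{\vx^*_\ili}{\vx^1_\ili}\le \log n_i$ and $\KL{\vx^1_\ili}{\vec\tau_i}$ is expressible via $\log n_i$ and $\bdT$, yielding a constant of the form $(\log n_i)/\eta + \sum_i \avg{i}{\lambda_i}(\log n_i + \bdT)$. The noise sum is handled by $\|\vu^t_i\|_\infty \le \bdU$ together with the elementary bound
\[
    \sum_{t=1}^{T} \frac{1}{\lambda_i t + 1/\eta} \;\le\; \min\mleft\{\frac{2\log T}{\lambda_i},\; \eta T\mright\},
\]
obtained by comparing $\lambda_i t + 1/\eta$ to $\max\{\lambda_i t,\,1/\eta\}$ and using $\sum_{t=1}^T 1/t \le 1+\log T \le 2\log T$ for $T \ge 2$. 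The inner-product sum is simply left as-is, to become the martingale-type term in the final statement.

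Finally, I would divide the telescoped inequality by $T-1$ and apply the elementary bound $1/(T-1) \le 2/T$ (valid for $T \ge 2$), which both converts the prefactor to the stated $1/T$/$2/T$ scales and explains the factor of $2$ absorbed into the definition of $\rho$ and placed in front of the martingale term. The main obstacle, such as it is, is purely bookkeeping: keeping track of the factor-of-two conversion, the range of summation (so that the ``noise'' bound is stated with summation up to $T$ as in the statement), and matching the sign convention of $\bdT$ when simplifying $\KL{\vx^1_\ili}{\vec\tau_i}$. No new analytic idea is needed beyond \cref{prop:pot increase}, which has already absorbed the strong convexity of the regularized utilities and the three-point identity for the KL Bregman divergence.
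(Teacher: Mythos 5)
Your proposal is correct and follows essentially the same route as the paper's proof: telescope \cref{prop:pot increase}, lower-bound $\pot^T$ by $(T-1)\dsur{T}$ using $(T-1)\temp_{T-1}=1/\eta$ and the nonnegativity of the dropped KL terms, control $\pot^1$ via the uniform initial policy, bound the noise sum by $\bdU^2\min\{2\log T/\lambda_i,\eta T\}$, and divide by $T-1$ using $2(T-1)\ge T$. No substantive difference from the paper's argument.
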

\begin{proof}\allowdisplaybreaks
    Using the bound on $\pot^{t+1}-\pot^t$ given by \cref{prop:pot increase} we obtain
    \[
        \pot^T - \pot^1 &= \sum_{t=1}^{T-1} (\pot^{t+1} - \pot^t) \\
        &\le \sum_{t=1}^{T-1} \sum_{i\in\{1,2\}} \avg{i}{ \frac{\mleft\|\vu_i^t\mright\|^2_\infty}{4\lambda_i t + 4/\eta} + \mleft\langle \vec{A}_i \bar\vx^{t}_{-i} - \vec{u}^t_i, \vx^*_\ili - \vx^{t}_\ili \mright\rangle }\\
        &= \frac{1}{4}\sum_{i\in\{1,2\}} \avg{i}{ \sum_{t=1}^{T} \frac{\mleft\|\vu_i^t\mright\|^2_\infty}{\lambda_i t + 1/\eta}} + \sum_{t=1}^{T}\sum_{i\in\{1,2\}} \avg{i}{ \mleft\langle \vec{A}_i \bar\vx^{t}_{-i} - \vec{u}^t_i, \vx^*_\ili - \vx^{t}_\ili \mright\rangle }\\
        &\le \frac{1}{4}\sum_{i\in\{1,2\}} \avg{i}{ \sum_{t=1}^{T-1} \frac{\bdU^2}{\lambda_i t + 1/\eta}} + \sum_{t=1}^{T}\sum_{i\in\{1,2\}} \avg{i}{ \mleft\langle \vec{A}_i \bar\vx^{t}_{-i} - \vec{u}^t_i, \vx^*_\ili - \vx^{t}_\ili \mright\rangle }.
    \]
    We can now bound 
    \[
        \sum_{t=1}^{T} \frac{\bdU^2}{\lambda_i t + 1/\eta} &\le \bdU^2 \sum_{t=1}^{T} \min\mleft\{\frac{1}{\lambda_i t}, \eta\mright\} \\
        &\le \bdU^2 \min\mleft\{\sum_{t=1}^{T} \frac{1}{\lambda_i t}, \sum_{t=1}^{T} \eta\mright\}\\
        &\le \bdU^2 \min\mleft\{\frac{2 \log T}{\lambda_i}, T\eta \mright\}.
    \]
    On the other hand, note that
    \[
        \pot^T - \pot^1 &= -\pot^1 + \sum_{i\in\{1,2\}} \avg{i}{ \mleft(\lambda_i(T-1) + \frac{1}{\eta}\mright) \KL{\vx^*_\ili}{\vx^T_\ili} + \lambda_i\, \KL{\vx^T_\ili}{\vec\tau_i}} \\
            &\ge -\pot^1 + \sum_{i\in\{1,2\}} (T-1)\avg{i}{(\lambda_i + \temp_{T-1}) \KL{\vx^*_\ili}{\vx^T_\ili}} \\
            &= (T-1)\dsur{T} - \sum_{i\in\{1,2\}} \avg{i}{\frac{\KL{\vx^*_\ili}{\vx^1_\ili}}{\eta} - \lambda_i \KL{\vx^1_\ili}{\vec\tau_i}} \\
            &\ge (T-1)\dsur{T} - \sum_{i\in\{1,2\}} \avg{i}{\frac{\log n_i}{\eta} + \lambda_i (\log n_i + \bdT) }\\
            &= (T-1)\dsur{T} - \rho,
    \]
    where the last inequality follows from expanding the definition of the KL divergence and using the fact that $\vx^1_\ili$ is the uniform strategy.
    Combining the inequalities and dividing by $T-1$ yields
    \[
        &\dsur{T} \le \frac{\bdU^2}{4} \sum_{i\in\{1,2\}}\avg{i}{\min\mleft\{\frac{2 \log T}{(T-1)\lambda_i}, \frac{T}{T-1}\eta \mright\}} + \frac{\rho}{T-1} \\&\hspace{5cm}+ \frac{1}{T-1} \sum_{t=1}^{T}\sum_{i\in\{1,2\}} \avg{i}{ \mleft\langle \vec{A}_i \bar\vx^{t}_{-i} - \vec{u}^t_i, \vx^*_\ili - \vx^{t}_\ili \mright\rangle }.
    \]
    Finally, using the fact that $2(T-1) \ge T$ yields the statement.
\end{proof}

\begin{theorem}[Last-iterate convergence of \dilpikl in two-player zero-sum games]\label{thm:last iterate convergence} 
    Let $\rho$ be as in the statement of \cref{thm:distance to nash}.
    When both players in a zero-sum game learn using \dilpikl for $T$ iterations, their policies converge to the unique Bayes-Nash equilibrium $(\vx^*_1,\vx^*_2)$ of the regularized game defined by utilities~\eqref{eq:regularized C}, in the following senses:
    \begin{enumerate}[(a)]
        \item In expectation: for all $i\in\{1,2\}$ and $\lambda_i \in \Lambda_i$, at a rate of roughly $\log T / (\lambda_i T)$
            \[
                \E\Big[ \KL{\vx^*_\ili}{\vx^T_\ili} \Big] \le \frac{1}{\lambda_i T}\mleft(\rho + \frac{\log n_i}{\eta} + \frac{\bdU^2}{2}\sum_{j\in\{1,2\}}\avg{j}{\min\mleft\{\frac{2 \log T}{\lambda_j}, \eta T \mright\}}\mright).
            \]
            (We remark that for $\eta = 1/\sqrt{T}$ the convergence is never slower than $1/\sqrt{T}$).
        \item With high probability, at a rate of roughly $1/\sqrt{T}$: for any $\delta \in (0,1)$ and Player $i\in\{1,2\}$,
            \[
                \Prob\mleft[ \forall\, \lambda_i \in \Lambda_i : \KL{\vx^*_\ili}{\vx^T_\ili} \le  \E\Big[ \KL{\vx^*_\ili}{\vx^T_\ili} \Big] + \frac{8\sqrt{2}\, W}{\lambda_i\sqrt{T}}\sqrt{\log \frac{|\Lambda_i|}{\delta}} \mright] \ge 1 - \delta.
            \]
            A n upper bound on $\E\Big[ \KL{\vx^*_\ili}{\vx^T_\ili} \Big]$ was given in the previous point.
        \item Almost surely in the limit:
            \[
                \Prob\Big[\forall\, \lambda_i \in \Lambda_i : \KL{\vx^*_\ili}{\vx^T_\ili} \xrightarrow{T\to+\infty} 0\Big] = 1 \qquad\forall i \in \{1,2\}.
            \]
    \end{enumerate}
\end{theorem}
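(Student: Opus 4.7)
The plan is to build on \cref{thm:distance to nash}, which already decomposes $\dsur{T}$ into a deterministic term of order $(\log T)/T$ plus the stochastic term
\[
M_T \defeq \frac{2}{T}\sum_{t=1}^T \sum_{i\in\{1,2\}} \avg{i}{\langle \vec{A}_i \bar\vx^t_{-i} - \vu^t_i, \vx^*_\ili - \vx^t_\ili\rangle}.
\]
My first step is to establish the key martingale structure: letting $\mathcal{F}_t$ be the filtration generated by all sampled types and actions through iteration $t$, the equality $\vu^t_i = \vec{A}_i \vec{a}^t_{-i}$ together with $\E[\vec{a}^t_{-i}\mid\mathcal{F}_{t-1}] = \bar\vx^t_{-i}$ (since $\vec{a}^t_{-i}$ is obtained by drawing $\lambda_{-i}\sim\beta_{-i}$ and then $\vec{a}^t_{-i}\sim\vx^t_{-i,\lambda_{-i}}$, while the iterates $\vx^t_\ili$ are $\mathcal{F}_{t-1}$-measurable by Line~\ref{line:distribution}) shows that each summand is $\mathcal{F}_t$-measurable with zero conditional mean given $\mathcal{F}_{t-1}$. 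H\"older's inequality bounds each summand by $4W$, and the full per-$t$ term of $(T/2) M_T$ by $8W$.

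Given this structure, part~(a) follows by taking unconditional expectations in \cref{thm:distance to nash}: the martingale contribution vanishes, leaving exactly the deterministic bound. The pointwise lower bound
\[
\beta_i(\lambda_i)\,\lambda_i\,\KL{\vx^*_\ili}{\vx^T_\ili} \le \dsur{T},
\]
which holds by nonnegativity of every other term in the definition of $\dsur{T}$, then converts the expectation bound into the claimed per-type rate (with the factor $\beta_i(\lambda_i)^{-1}$ absorbed into the leading constant). For part~(b), I would apply Azuma-Hoeffding to the scalar martingale $\{(T/2) M_T\}$: bounded differences of magnitude $8W$ give $|M_T|\le O(W\sqrt{\log(1/\delta)/T})$ with probability at least $1-\delta$, and a union bound over the finite type set $\Lambda_i$ yields the $\sqrt{\log(|\Lambda_i|/\delta)}$ dependence. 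Combining this concentration with the deterministic bound from \cref{thm:distance to nash}, and again invoking the pointwise lower bound on $\dsur{T}$, produces the stated deviation for $\KL{\vx^*_\ili}{\vx^T_\ili}$, with the $1/\lambda_i$ factor arising precisely from this final comparison.

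Part~(c) follows from (b) by a standard Borel-Cantelli argument: choosing the confidence parameter as $\delta_T = 1/T^2$ makes the failure probabilities summable, so the high-probability event in (b) holds for all but finitely many $T$ almost surely; since the corresponding threshold tends to zero, $\KL{\vx^*_\ili}{\vx^T_\ili}\to 0$ almost surely, and a union bound over the finite set $\Lambda_i$ preserves the universal quantifier over $\lambda_i$. The main obstacle is the precise verification of the martingale structure in the first step, which requires carefully identifying the filtration---\bqrealgo's iterates $\vx^t_\ili$ must be $\mathcal{F}_{t-1}$-measurable while $\vu^t_i$ depends on the \emph{current} sampled action, so that conditioning on $\mathcal{F}_{t-1}$ leaves exactly the randomness that averages away the noise. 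A secondary subtlety is that \cref{thm:distance to nash} only controls the $\beta$-averaged quantity $\dsur{T}$, so extracting per-type guarantees is necessarily lossy by the constant factor $\beta_i(\lambda_i)^{-1}$; everything else reduces to routine concentration-of-measure machinery.
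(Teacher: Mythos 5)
Your proposal is correct and follows essentially the same route as the paper's proof: the same martingale-difference identification of the noise term from \cref{thm:distance to nash}, expectation for (a), Azuma--Hoeffding plus a union bound over $\Lambda_i$ for (b), and Borel--Cantelli for (c). The only differences are constant-level: you retain the factor $\beta_i(\lambda_i)^{-1}$ when passing from the $\beta_i$-averaged quantity $\dsur{T}$ to a per-type bound (which the paper's displayed inequality $\KL{\vx^*_\ili}{\vx^T_\ili}\le \dsur{T}/\lambda_i$ silently drops), and your per-step bounded-difference constant $8W$ is looser than the paper's $4W$, so you would recover the stated $8\sqrt{2}\,W$ only up to a factor of $2$.
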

\begin{proof}
    We prove the three statements incrementally.
    \begin{enumerate}[(a)]
        \item Let $\mathcal{F}_t$ be the $\sigma$-algebra generated by $\{\vec{u}^{t'}_i \mid t' = 1,\dots, t-1, i\in\{1,2\}\}$. We let $\E_t[\,\cdot\,] \defeq \E[\,\cdot\mid\mathcal{F}_t]$. Since \pikl is a deterministic algorithm, $\vx^t_\ili$ is $\mathcal{F}_t$-measurable. Hence, given that $\vec{u}_i^t$ is an unbiased estimator of $\vec{A}_i\bar\vx^t_{-i}$
        we have that at all times $t$
        \[
            \E_t\mleft[\mleft\langle \vec{A}_i \bar\vx^{t}_{-i} - \vec{u}^t_i, \vx^*_\ili - \vx^{t}_\ili \mright\rangle\mright] &= \mleft\langle \E_t\mleft[\vec{A}_i \bar\vx^{t}_{-i} - \vec{u}^t_i\mright], \vx^*_\ili - \vx^{t}_\ili \mright\rangle = 0.\numberthis{eq:mds}
        \]
        Note that from the definition of $\dsur{T}$ given in \cref{thm:distance to nash}
        \[
            \KL{\vx^*_\ili}{\vx^T_\ili} \le \frac{1}{\lambda_i} \dsur{T}.\numberthis{eq:dist bounded by pot}
        \]
        Hence, taking expectations and using \eqref{eq:mds} yields the statement.
        \item To prove high-probability convergence, we use the Azuma-Hoeffding concentration inequality. In particular,~\eqref{eq:mds} shows that the stochastic process
        \[
            \mleft(\sum_{j\in\{1,2\}} \avg{j}{\langle \vec{A}_j \bar\vx^{t}_{-j} - \vec{u}^t_j, \vx^*_j - \vx^{t}_j \rangle}\mright)_{t=1,2,\dots}
        \]
        is a martingale difference sequence adapted to the filtration $\mathcal{F}_t$. Furthermore, note that
        \[
            \mleft|\sum_{j\in\{1,2\}} \avg{j}{\langle\vec{A}_j \bar\vx^{t}_{-j} - \vec{u}^t_j, \vx^*_{j,\lambda_j} - \vx^{t}_{j,\lambda_j} \rangle} \mright| \le 4W
        \]
        for all $t$. Hence, using the Azuma-Hoeffding inequality for martingale difference sequences we obtain that for all $\delta\in(0,1)$,
        \[
            \Prob\mleft[\sum_{t=1}^{T} \sum_{j\in\{1,2\}} \avg{j}{\langle\vec{A}_j \bar\vx^{t}_{-j} - \vec{u}^t_j, \vx^*_j - \vx^{t}_j} \rangle \le 4W\sqrt{2T\log\frac{1}{\delta}} \mright] \ge 1-\delta.
        \]
        Plugging the above probability bound in the statement of \cref{thm:distance to nash} and using the union bound over $\lambda_i\in\Lambda_i$ yields the statement.
        \item follows from (b) via a standard application of the Borel-Cantelli lemma.
    \end{enumerate}
\end{proof}

\section{Model architecture}
\label{app:modelarchitecture}
Our model architecture closely resembles the architecture used in past work on no-press Diplomacy \citep{bakhtin2021no,jacob2022modeling,anthony2020learning,paquette2019no,gray2020human}.

\begin{table*}[h]
\center
\begin{tabular}{llr}
\toprule
\bf Feature & \bf Type & \bf Number of Channels \\
\midrule
Presence of army/fleet? & Binary & 2 \\
Army/fleet owner & One-hot (7 players), or all zero & 7 \\
Build turn build/disband & Binary & 2 \\
Dislodged army/fleet? & Binary & 2 \\
Dislodged unit owner & One-hot (7 players), or all zero & 7 \\
Land/coast/water & One-hot & 3 \\
Supply center owner & One-hot (7 players), or all zero & 8 \\
Home center & One-hot (7 players), or all zero & 7 \\
\bottomrule
\end{tabular}
\caption{Per-location board state input features}
\label{table:per-location-board-features}
\end{table*}

\begin{table*}[h]
\center
\begin{tabular}{llr}
\toprule
\bf Feature & \bf Type & \bf Number of Channels \\
\midrule
Number of builds allowed during winter & Float & 1 \\
\bottomrule
\end{tabular}
\caption{Per-player board state input features}
\label{table:per-player-board-features}
\end{table*}

\begin{table*}[h]
\center
\begin{tabular}{llr}
\toprule
\bf Feature & \bf Type & \bf Channels \\
\midrule
Season (spring/fall/winter) & One-hot & 3 \\
Year (encoded as $(y - 1901)/10$) & Float & 1 \\
Game has dialogue? & Binary & 1 \\
Scoring system used & One-hot & 2 \\
\bottomrule
\end{tabular}
\caption{Global board state input features}
\label{table:global-board-features}
\end{table*}

\begin{figure*}[h]
\centering
\includegraphics[width=15cm]{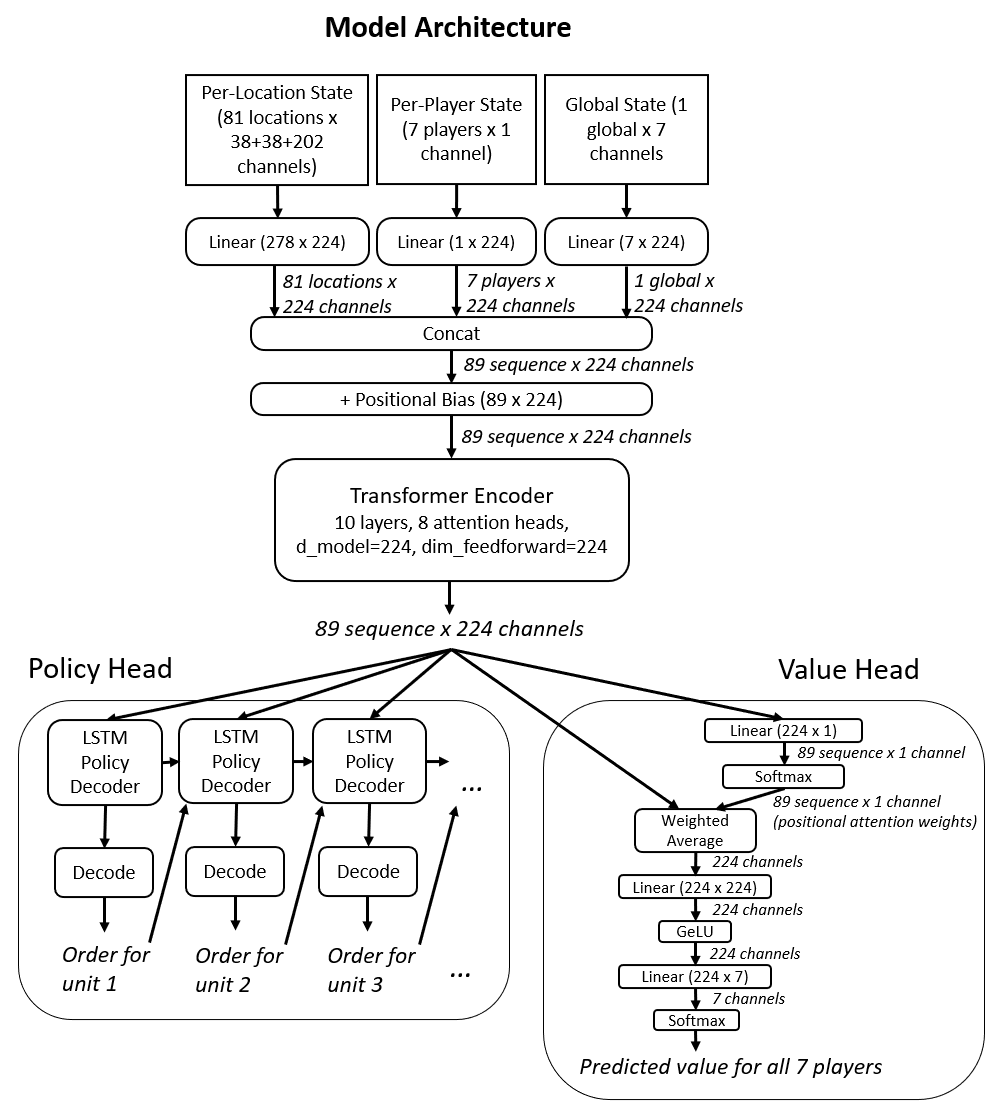}
\vspace{-0.1in}
\caption{\small Model architecture used for policy/value learning in no-press Diplomacy.}
\label{fig:nopressmodelarchitecture}
\end{figure*}

Given a gamestate, to construct the input to the model, for each of the 81 possible locations and/or special coastal areas on the board that a unit can occupy, we encode the 38 feature channels described in \autoref{table:per-location-board-features} for that location. We also encode the previous board state in this way, as well using an encoding of the order history as described in \cite{gray2020human} provides an additional 202 channels per board location indicating the prior orders at that location. 

Separately, we also encode per-player and global features of the gamestate into additional tensors (\autoref{table:per-player-board-features},\autoref{table:global-board-features}). Each of these tensors (per-location, per-player, global) is passed through a linear layer with 224 output channels, and then all three are concatenated to a single (81+7+1) x 224 tensor. Thereafter, following \cite{bakhtin2021no}, we apply a learnable positional bias and pass the result to a to a standard transformer encoder architecture with 10 layers, channel width 224, 8 dot-product-self-attention heads per layer, and GeLU activation.

Finally we decode the policy via the same LSTM decoder head as \cite{gray2020human}, and predict the game values of all 7 players using a value head that applies softmax attention to the encoder output, followed by a linear layer with 224 channels, GeLU activation, a linear layer with 7 channels, and a softmax. See Figure \ref{fig:nopressmodelarchitecture} for a graphical diagram of the model.

\section{Human Imitation Anchor Policy Training}
\label{app:ilanchortraining}

Similar to prior work \citep{bakhtin2021no,jacob2022modeling,gray2020human}, to obtain a human imitation anchor policy with which to use for piKL regularization and to initialize the RL policy, we train the architecture described in \autoref{app:modelarchitecture} on a dataset of roughly 46000 online Diplomacy games provided by webdiplomacy.net. We train jointly on both games with full-press Diplomacy (i.e. where players were able to communicate via messages) and no-press Diplomacy and at inference time and/or during RL, condition the relevant global feature in \autoref{table:global-board-features} to indicate the model should predict for no-press Diplomacy. Also in common with the same prior work, we apply data filtering to skip training on actions where players missed the time limit and a default null action was inputted by the website, and to only train on actions played by the top half of rated players. We also adopt the method of \cite{jacob2022modeling} to augment the data by permuting the labels of the 7 players randomly during training, since the game's rules are fully equivariant to such permutations. See \autoref{tab:sl_train_hyperparams} for a list of other hyperparameters.

\begin{table}[h]
\center
\begin{tabular}{l|r}
\toprule
Learning rate & $2 \times 10^{-3}$ \\
Learning rate decay per epoch & 0.99 \\
Linear LR warmup epochs & 10 \\
Total epochs & 390 \\
Gradient clip max norm & 0.5 \\
Batch size & 16000 \\
Batches per epoch & 270 \\
Value loss weight & 0.7 \\
Policy loss weight & 0.3 \\
Optimizer & ADAM \\
Transformer encoder dropout & 0.3 \\
Policy head LSTM dropout & 0.3 \\

\bottomrule
\end{tabular}
\caption{\small Hyper-parameter values used to train the IL anchor policy on human data.}
\label{tab:sl_train_hyperparams}
\end{table}

\section{Self-play Training}
\label{app:rltraining}

\begin{figure}[ht]
\small
\vspace{-0.05in}
    \begin{minipage}{\columnwidth}
        \SetInd{0.4em}{0.6em}
        \begin{algorithm}[H]\caption{RL Loop}\label{algo:rl_loop}
            \DontPrintSemicolon
            \Fn{\normalfont\textsc{DataGenerationLoop}()}{
                \While{true}{
                    $\mathsf{Game} \gets \textsc{NewGame}()$ \;
                    $\theta_v \gets \textsc{GetNewValueFunction}()$\;
                    $\theta_\pi \gets \textsc{GetNewPolicyFunction}()$  \tcp*{Not used for NPU algorithm}
                    \While{not \textsc{IsDone}(Game)} {
                     $s \gets \textsc{EncodeState}(\text{Game})$ \;
                     $\myvector{A} \gets \textsc{GetPlausibleAction}(\theta_\pi)$ \;
                     $\myvector{A} \gets \textsc{DoubleOracleActionExploration}(\myvector{A}, \theta_\pi, \theta_v)$ \tcp*{Only used for DORA}
                     $\myvector{\sigma}, \myvector{u} \gets \textsc{RunSearch}(s, \myvector{A}, \theta_v, \tau)$ \tcp*{Regret Matching or \bqrealgo{}}
                     $\textsc{SendToBuffer}(s, \myvector{\sigma}, \myvector{u})$ \;
                     \tcp{Sample from the policy with possible $\eps$-exploration}
                     $\myvector{a} \gets \textsc{SelectAction}(\myvector{\sigma})$\;
                     $\text{Game} \gets \textsc{NextState}(\text{Game}, \myvector{a})$ \;
                    }
                }
            }
            \Hline{}
            \Fn{\normalfont\textsc{TrainingLoop}()}{
                $\theta_v \gets BCValue()$ \tcp*{Not used for DORA}  
                $\theta_\pi \gets BCPolicy()$  \tcp*{Not used for DORA}      \While{true}{
                     $s, \myvector{\sigma}, \myvector{u} \gets \textsc{ReadFromBuffer}()$ \;
                     $Loss \gets \textsc{PolicyLoss}(s, \myvector{\sigma}, \theta_\pi) + \textsc{ValueLoss}(s, \myvector{u}, \theta_v)$ \;
                     $\textsc{GradientStep}(Loss)$\;
                     $\textsc{SaveNewValueFunction}(\theta_v)$\;
                     $\textsc{SaveNewPolicyFunction}(\theta_\pi)$\;
                }
            }
        \end{algorithm}
\end{minipage}
\caption{High-level description of DNVI-style algorithms. The DORA agent is initialized from scratch and requires a Double Oracle action exploration procedure to perform well. The NPU (no policy update) modification uses the behavioral cloning policy for the policy proposal network throughout the whole training. DORA, DNVI, and DNVI-NPU use RM as the search algorithm, while the other training methods use \bqrealgo{}. \label{alg:rl_loop}.}
\end{figure}

Our self-play training algorithm closely matches that of DORA from \cite{bakhtin2021no}, described in detail in Section \ref{sec:dora}. The overall self-play procedure (see~\autoref{alg:rl_loop}), the training data recorded, loss function used on that data, and sampling methods we use are all the same. The differences are:
\begin{itemize}
\item Although our model architecture is largely identical to that of past work, some minor details, including the precise encoding of input features, and the construction of the value head are different, see Appendix \ref{app:modelarchitecture} for description of our architecture.
\item During RL training, we initialize the RL policy proposal and value functions from the human IL anchor policy and value function (\autoref{app:ilanchortraining}) instead of randomly from scratch, and during training, rather than using regret matching to compute the 1-step equilibrium $\sigma$ on each turn of the game, we use DiL-piKL. The distribution of $\lambda$ and the human IL anchor policy remain fixed through all of training.
\item During training, the action chosen to explore in the self-play game uses a randomly chosen $\lambda$ from the DiL-piKL distribution. Similarly, the RL policy is trained to predict the policy of a random $\lambda$. This ensures that the RL policy, when used at test time to propose actions, samples both human IL-like actions from high $\lambda$, as well as more optimized actions from lower $\lambda$, and that gamestates resulting from the entire range of possible $\lambda$ are in-distribution for the RL policy and value models.
\item Unlike DORA, double-oracle action exploration is \emph{not} used during training. We found that with the additional diversity and regularization of the human anchor policy, it was unnecessary.
\item All models were also trained with the same stochastic game-end rules we used in evaluation games against human players described in Section \ref{sec:experimentalsetup}.
\item Some hyperparameters we use may be different than that of past work. See Appendix \ref{app:hyperparams} for a list of hyperparameters.
\end{itemize}

\subsection{Hyper-parameters for RL training}
\label{app:hyperparams}

For the evaluation in this paper we trained \botname and BRBot agents and re-trained DNVI, DNVI-NPU, and DORA agents. We provide the hyper-parameters used in table~\ref{tab:rl_train_hyperparams}. We show parameters out of 3 agents from~\citet{bakhtin2021no} as they are the same.

\begin{table}[h]
\center
\begin{tabular}{l|rrrr}
\toprule
 & \botname & \botname & BRBot & DORA  \\
 & High & Low  & &   \\
\midrule
Learning rate & \multicolumn{4}{c}{$10^{-4}$}  \\
Gradient clip max norm &   \multicolumn{4}{c}{0.5} \\
Warmup updates & \multicolumn{4}{c}{10k} \\
Batch size & \multicolumn{4}{c}{1024} \\
Buffer size & \multicolumn{4}{c}{1,280,000}  \\
Max train/generation ratio & \multicolumn{4}{c}{6} \\
Optimizer & \multicolumn{4}{c}{ADAM} \\
Transformer encoder dropout & \multicolumn{4}{c}{0} \\
Policy head LSTM dropout & \multicolumn{4}{c}{0} \\
\midrule
Search algorithm & \bqrealgo & \bqrealgo & \bqrealgo & RM\\
Type distribution (self) & $\{10^{-2}, 10^{-1}\}$ & $\{10^{-4}, 10^{-1}\}$ & $\{+\infty\}$ & $\{0\}$  \\
Type distribution (other) & $\{10^{-2}, 10^{-1}\}$ & $\{10^{-4}, 10^{-1}\}$ & $\{0\}$ & $\{0\}$  \\
Search iterations    & 256& 256 & 256 & 256 \\
Number of candidate actions ($N_c$)    & 50& 50 & 50 & 50 \\
Max candidate actions per unit   & 6& 6 & 6 & 6 \\
Nash explore ($\varepsilon$)    & 0.1& 0.1 & 0.1 & 0.1 \\
Nash explore, S1901M     & 0.1& 0.1 & 0.1 & 0.3 \\
Nash explore, F1901M     & 0.1& 0.1 & 0.1 & 0.2 \\
\bottomrule
\end{tabular}
\caption{\small Hyper-parameter values used to train RL agents.}
\label{tab:rl_train_hyperparams}
\end{table}

\subsection{Inference time effect of \bqrealgo}

In Figure \ref{fig:dil_pikl_inference} we show that running DiL-piKL at evaluation time alone is not enough to get the demonstrated performance improvement in population scores.
Using DiL-piKL on top of human imitation-learned policy/value functions does not improve the population score compared to Hedge.
However, applying this search method on top of policies/values that were trained via RL with DiL-piKL results in significant improvement in the scores.

\begin{figure}
    \centering
    \includegraphics[width=0.8\linewidth]{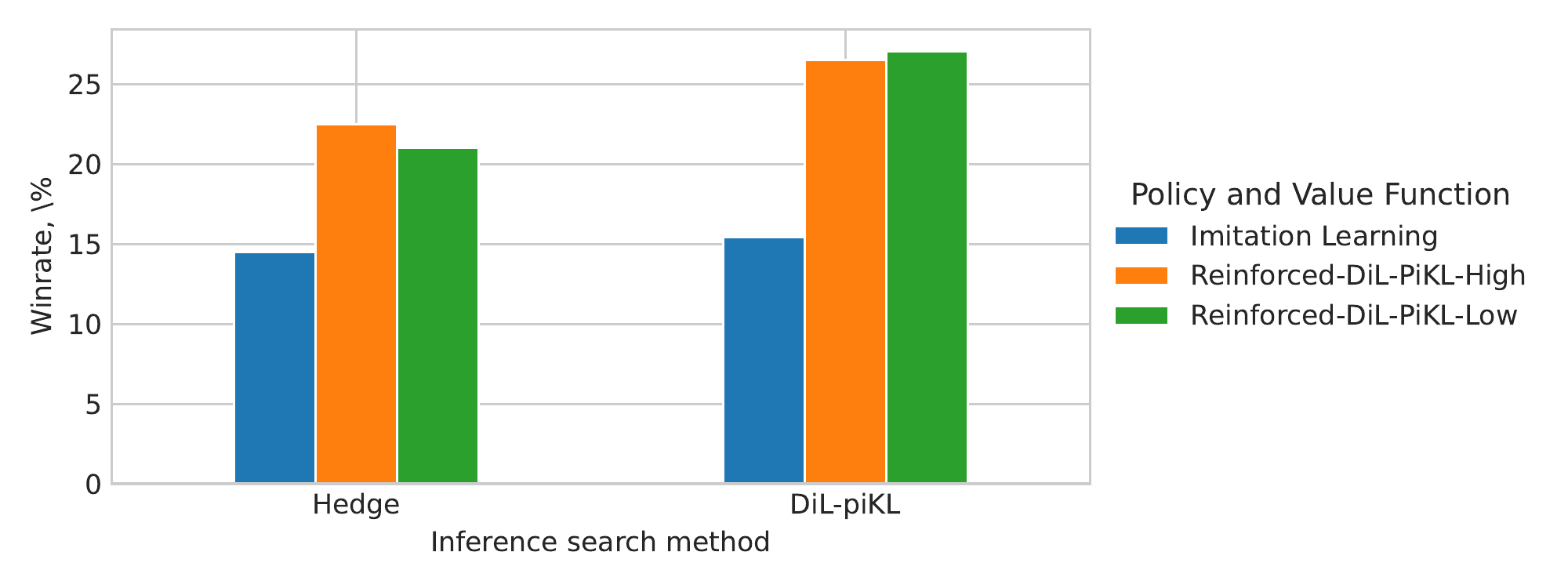}
    \caption{Performance of different search algorithms at inference time for models trained with RL and IL. Applying DiL-piKL at inference time rather than Hedge only slightly improves an IL-based search agent, but greatly improves RL agents trained with DiL-piKL. 
    }
    \label{fig:dil_pikl_inference}
\end{figure}

\section{Bayes-ELO}\label{app:bayes_elo}
BayesElo \citep{coulom2005bayeselo} models each player's expected share of the total score in a 2-player game as proportional to:
$$\exp((r_i + b_{s(i)})/c)$$ 
where $r_i$ is the Elo rating of player $i$, $b_1$ and $b_2$ are the advantage/disadvantage of playing first/second in Elo, $s(i) \in {1,2}$ indicates whether $i$ played first or second, and $c = 400 \log_{10}(e)$ is a fixed scaling constant that adjusts for the particular arbitrary numerical scale of ratings expected by users, in particular that 400 points in Elo systems generally corresponds to a 10-fold increase in expected winning odds or expected average score..

It then finds joint maximum-a-posteriori values $r_i$, $b_i$ given all observed data and an optional prior to regularize the model. In some cases, the biases $b_i$ may also be hardcoded or provided as parameters rather than inferred from the data, in our work we infer them. In our application, we use a weak Bayesian prior such that each player's rating was a-priori normally distributed around 0 with a standard deviation of around 350 Elo.

BayesElo generalizes naturally to more than 2 players simply by allowing $i$ and $s(i)$ to range over $\{1,...,n\}$ rather than $\{1,2\}$, and we straightforwardly apply this to Diplomacy. Since there are 7 players, we similarly jointly fit $b_1$,...$b_7$ on the data to model the asymmetric advantage/disadvantage of the 7 different starting positions. Computed Elo ratings closely reflect empirical winning percentages of players in a given population, but also take into account variability in the strength of opposition in a game, and the starting advantage/disadvantage $b_i$. For example, if a player achieved a high average score but was abnormally lucky in drawing advantageous starting countries across all their games, then the model would likely estimate a lower rating than if they achieved the same results with more difficult starting countries.

In Diplomacy, on the 200 games of the human tournament in which we evaluated \botname and other agents, the empirical fitted $b_i$ values for the 7 different starting countries in the game are displayed in \autoref{tab:power_elos}.

\begin{table}[h]
\center
\begin{tabular}{l|r}
\toprule
Starting Country & $b_i$ (Elo) \\
\midrule
Austria & -24 \\
England & -43 \\
France & 59 \\
Germany & 18 \\
Italy & -21 \\
Russia & -16 \\
Turkey & 27 \\
\bottomrule
\end{tabular}
\caption{\small For each starting country, the empirical advantage/disadvantage of starting as that country measured in Elo rating equivalent units, fitted jointly with all players' Elo ratings on the 200 Diplomacy games of the tournament. The values roughly agree with common opinions among Diplomacy players, particularly that France is one of the best starting countries in no-press, while Austria and England are among the weaker starts.}
\label{tab:power_elos}
\end{table}

\end{document}